\newtheorem{theorem}{Theorem}
\newtheorem{proposition}{Proposition}
\newtheorem{lemma}{Lemma}
\newtheoremstyle{as}
{\topsep}
{\topsep}
{\itshape}
{}
{\scshape}
{.}
{ }
{\thmname{\textbf{\textit{#1}}}\thmnumber{ $\mathbf{(#2)}$}\thmnote{ (#3)}}       
\theoremstyle{as}
\newtheoremstyle{pro}
{\topsep}
{\topsep}
{\itshape}
{}
{\scshape}
{.}
{ }
{\thmname{\textbf{\textit{#1}}}\thmnumber{ $\mathbf{#2}$}\thmnote{ (#3)}}%
\theoremstyle{pro}
\newtheoremstyle{lem}
{\topsep}
{\topsep}
{\itshape}
{}
{\scshape}
{.}
{ }
{\thmname{\textbf{\textit{#1}}}\thmnumber{ $\mathbf{#2}$}\thmnote{ (#3)}}%
\theoremstyle{lem}
\begin{document}
\centerline{\textbf{\Large The Uniformed Patroller Game}}
\bigskip
\centerline{\textbf{Steve Alpern}\footnote{\scriptsize ORMS Group, Warwick Business School, University of Warwick, Coventry, CV4 7AL, UK, steve.alpern@wbs.ac.uk}, \textbf{Paul Chleboun}\footnote{\scriptsize Department of Statistics, University of Warwick, Coventry, CV4 7AL, UK, Paul.I.Chleboun@warwick.ac.uk}, \textbf{Stamatios Katsikas}\footnote{\scriptsize Centre for Complexity Science, University of Warwick, Coventry, CV4 7AL, UK, stamkatsikas@gmail.com}\footnote{\scriptsize International Education Institute, University of St Andrews, St Andrews, KY16 9DJ, UK, sk296@st-andrews.ac.uk}}
\smallskip 
\centerline{\today}
\bigskip
\begin{abstract}
\noindent  
Patrolling Games were introduced by \cite{Alp-1} to model the adversarial problem where a mobile Patroller can thwart an attack at some location only by visiting it during the attack period, which has a prescribed integer duration $m$. In this note we modify the problem by allowing the Attacker to go to his planned attack location early and observe the presence or the absence there of the Patroller (who wears a uniform). To avoid being too predictable, the Patroller may sometimes remain at her base when she could have been visiting a possible attack location. The Attacker can then choose to delay attacking for some number of periods $d$ after the Patroller leaves his planned attack location. As shown here, this extra information for the Attacker can reduce thwarted attacks by as much as a factor of four in specific models. Our main finding, conjectured by the Associate Editor, is that the attack should begin in the second period the Patroller is away ($d=2$) and that the Patroller should never attack the same location in consecutive periods.
\end{abstract}
\bigskip
\noindent\textbf{Keywords}: two-person game; constant-sum game; patrolling game; star network; uniformed patroller; attack duration; delay.

\section{Introduction}\label{Section1}
\indent

Patrolling games were introduced by \cite{Alp-1} to model the adversarial problem where a mobile Patroller can  thwart an attack at some location only by visiting (inspecting) there during the attack period. The attack period has a prescribed duration that is an integer number $m$ of consecutive periods (the term attack duration will be used  interchangeably with the term attack difficulty). In this paper we introduce a modification of the original model, whereby the Attacker can arrive at his intended attack location early and observe the presence or absence there of the Patroller. We say that the Patroller is wearing a uniform in order to stress that she is observable, but only at short range, when being at the location of the Attacker. This property gives the Attacker an additional choice variable, that is, the number of periods $d$ to delay his attack after the Patroller leaves the attack location (the term attack delay will be used  interchangeably with the term waiting time). If the Patroller comes back before these $d$ periods, the Attacker must restart his count. Thus the attack begins in the first period in which the Patroller has been recorded to be away for $d$ consecutive periods.

To illustrate the parameter $d$, suppose that the Patroller's presence at the attack location is indicated by a $1$, and her absence is indicated by a $0$. Suppose also that the Attacker waits until the Patroller arrives at the attack location, and after that her presence-absence sequence that he records is given by $11010100\dots$ (see Table \ref{Table1}). 

\setlength{\tabcolsep}{13pt}
\renewcommand{\arraystretch}{1}
\begin{table}[H]
\centering
\begin{tabular}{ c  c  c  c  c  c  c  c }
\hline\hline
$1$ & $11$ & $110$ & $1101$ & $11010$ & $110101$ & $1101010$ & $11010100$ \\ \hline
\textit{wait} & \textit{wait} & \textit{wait} & \textit{wait} & \textit{wait} & \textit{wait} & \textit{wait} & \textit{attack!} \\ 
\hline\hline
\end{tabular}
\caption{The Attacker attacks after $d=2$ periods of the Patroller's absence}
\label{Table1}
\end{table}

If for example the Attacker's waiting time is $d=2$, then his decision as to whether or not to attack in each subsequent period is illustrated in Table \ref{Table1}. If also the attack duration is $m=3$, then the attack will be successful only if the Patroller's sequence continues with two more $0$'s, that is, $1101010000\dots$. Note that the attack can begin in the same period the Patroller's absence has been observed (if $d=1$). Thus, we take $m\geq 2$ as otherwise the Attacker could simply win the game by attacking as soon as the Patroller is not present. 

In our model the Patroller has a base (not considered a `location') from which she can reach any of the $n$ locations she has to defend in a single period. 
In principle, the Patroller could visit a location in consecutive periods and her base could also be attacked, but we show that such a behavior is not optimal. 
Thus after visiting a location, the Patroller returns to her base. 
One might think that the Patroller should always visit some location immediately after returning to her base, but in fact she might want to stay at her base for consecutive periods to produce additional uncertainty for the Attacker.
We determine the optimal probability $r$ in which to remain at her base for an additional period, as a function of the number of locations $n$ and the attack duration $m$ and delay $d$. Our main general finding (Theorem \ref{Theorem1}) is that $d=2$ is always an optimal value of the delay in attacking, and that the Patroller should never inspect (stay at) a location in consecutive periods.
The later result is proved using what is known as a coupling argument.

We assume that the Patroller knows her current location, but she does not remember her previous sequence of locations (i.e. she is Markovian). The assumption of a Markovian Patroller is a standard approach in the literature of patrolling games, particularly when Patrollers are robots, see, e.g., \cite{Basilico-1,Basilico-2,Collins}. For the Attacker, we determine his optimal delay $d$. We call this game the \textit{The Uniformed Patroller Game}.

After solving the game, it is straightforward to compare its value (i.e. the optimal probability that an attack is thwarted) to the corresponding easily tractable version where the Patroller wears no uniform. We find, for example, that for short attack duration $m=2$ and for a large number of locations $n$, wearing a uniform reduces the interception probability for an attack by a factor of $4$. Hence, we believe that the common occurrence of uniformed Patrollers is likely explained not by their ability to intercept attacks, but to the effect on uniforms revealing the presence of a Patroller and thereby deterring attacks from taking place at all. In our model, as in previous patrolling models, the Attacker will never be deterred from carrying out an attack. Deterrence could be modeled in future work. When there are multiple Patrollers with prescribed paths (e.g. a single direction around a circle) but varying start times, it has been found in a thought provoking paper by \cite{Lin-3} that, unlike our result, wearing a uniform does not hurt the Patroller(s). In this sense our paper and that by \cite{Lin-3} show that the effect of wearing a uniform depends greatly on whether fixed or unpredictable patrolling paths are used. This is where the Markovian nature of the paths in our model is particularly important.

Patrolling problems have been studied for a long time, see e.g. \cite{Morse}, but only from the Patroller's viewpoint. A game theoretic approach, modelling an adversarial Attacker who wants to infiltrate or attack a network at a node of his choice, has only recently been introduced by \cite{Alp-1}. The techniques developed there were also applied to the class of line networks by \cite{Pap-1}, with the interpretation of patrolling a border. A continuous version of patrolling games on a network has been given by \cite{Garrec}. Research of a similar reasoning includes \cite{Lin-1} for random attack times, \cite{Lin-2} for imperfect detection, \cite{Hochbaum} on security routing games, and \cite{Basilico-2} for uncertain alarm signals. See also \cite{Baykal} on infracstructure security games. Earlier work on patrolling a channel/border with different paradigms, includes \cite{Washburn-1,Washburn-2}, \cite{Szechtman}, \cite{Zoroa}, and \cite{Collins}. The case of a Patroller restricted to periodic walks has been studied by \cite{Alp-2}. The related problem of ambush is studied by \cite{Baston-1,Baston-2}. An artificial intelligence approach to patrolling is given by \cite{Basilico-1}. Applications to airport security and counter terrorism, are given by \cite{Pita}, and \cite{Fokkink} respectively. Our problem is also akin to that of a restaurant waiter who must return to each table with a certain frequency depending on the number of its diners. An approach to our problem when the locations to be defended have a graph theoretic structure is given in \cite{Alp-3}, where the Patroller can only move along edges. It is shown there that for some graphs the optimal delay is not the value $d=2$ we find here.

Section \ref{Section2} gives a solution to the game for all values of the parameters $m$ and $n$. A closed form solution for the optimal value of $p$ is given for $m=2$ in Section \ref{Section2.1}, for all odd values of $m$ in Section \ref{Section2.2}, and for $m=4$ in Section \ref{Section2.3}. Asymptotic results for $m=4$ as $n$ goes to infinity are given in Section \ref{Section2.4}. Our main result, that $d=2$ and that a location should not be visited in consecutive periods, is established in Theorem \ref{Theorem1} in Section \ref{Section2.5}. In Section \ref{Section2.6} we give an explicit formula for the interception probability as an $m^{th}$ degree polynomial in $p$, and express the probability of intercepting an attack in terms of the distribution of a certain hitting time of a three state birth-death chain. Section \ref{Section2.7} compares the value of the game with the analogous game where the Patroller does not wear a uniform. Section \ref{Section3} concludes.

\section{Analysis}\label{Section2}
\indent

We view the Uniformed Patroller game as being played on a star $S_{n}$ consisting of a center $C$, that is the Patroller's base, connected to $n$ end nodes representing the $n$ locations the Patroller has to defend. We restrict our analysis to a Markovian Patroller, such that she reflects from the $n$ end nodes with probability $s$, from the center $C$ goes to each end node with probability $p$, and remains at the center with probability $r=1-np$. We assume that the two probabilities $p$ and $s$ are both positive so that game is well defined. 
This setting simplifies the game by introducing a two parameter family of patrolling strategies.

\begin{figure}[h!]
\begin{center}
\centering
\psset{unit =0.425cm, linewidth=0.45\pslinewidth}
\begin{pspicture}
\cnodeput(5,10){A}{$E$}\cnodeput(5,0){B}{$E$}\cnodeput(-1,5){C}{$A$}\cnodeput(11,5){D}{$E$}\cnodeput(8.5356,8.5356){E}{$E$}\cnodeput(8.5356,1.4644){F}{$E$}\cnodeput(1.4644,1.4644){G}{$E$}\cnodeput(1.4644,8.5356){H}{$E$}\cnodeput(5,5){I}{$C$}\psset{shortput=nab,arrows=->,labelsep=3pt}\ncarc[arcangle=15]{I}{A}\ncput*{$p$}\ncarc[arcangle=15]{A}{I}\ncput*{$s$}\ncarc[arcangle=15]{I}{D}\ncput*{$p$}\ncarc[arcangle=15]{D}{I}\ncput*{$s$}\ncarc[arcangle=15]{I}{B}\ncput*{$p$}\ncarc[arcangle=15]{B}{I}\ncput*{$s$}\ncarc[arcangle=15]{I}{C}\ncput*{$p$}\ncarc[arcangle=15]{C}{I}\ncput*{$s$}
\ncarc[arcangle=15]{G}{I}\ncput*{$s$}\ncarc[arcangle=15]{I}{G}\ncput*{$p$}
\ncarc[arcangle=15]{F}{I}\ncput*{$s$}\ncarc[arcangle=15]{I}{F}\ncput*{$p$}
\ncarc[arcangle=15]{E}{I}\ncput*{$s$}\ncarc[arcangle=15]{I}{E}\ncput*{$p$}
\ncarc[arcangle=15]{H}{I}\ncput*{$s$}\ncarc[arcangle=15]{I}{H}\ncput*{$p$}
\end{pspicture}
\qquad \qquad \qquad \qquad
\begin{pspicture}
\cnodeput(-2,5){C}{$A$}\put(-4.5,4){,}{}\cnodeput(12,5){D}{$E$}\cnodeput(5,5){I}{$C$}\psset{shortput=nab,arrows=->,labelsep=3pt}\ncarc[arcangle=15]{I}{D}\ncput*{$(n-1)p$}\ncarc[arcangle=15]{D}{I}\ncput*{$s$}\ncput*{$s$}\ncarc[arcangle=15]{C}{I}\ncput*{$s$}\ncarc[arcangle=15]{I}{C}\ncput*{$p$}\ncput*{$p$}
\nccircle[nodesep=4pt]{->}{I}{.5cm}\ncput*{$r$}
\nccircle[nodesep=4pt]{->}{D}{.5cm}\ncput*{$1-s$}
\nccircle[nodesep=4pt]{->}{C}{.5cm}\ncput*{$1-s$}
\end{pspicture}
\end{center}
\caption{\label{Figure4}Left: The Star network $S_{n}$. \label{FigBD}Right: By symmetry, once the attacker has chosen a node $A$, we can reduce the Patroller Markov strategy to a three state birth-death chain on $\{E,C,A\}$. The holding probabilities are shown for clarity on the right, where $r = 1- np$.}
\end{figure}

Let $A$ denote the end node the Attacker has chosen for the attack, and let $E$ denote the set of all other end nodes (with a slight abuse of notation $E$ will sometimes be used to denote an arbitrary end node not equal to $A$). 
Of course, the Patroller does not know which end node is $A$, so this labeling does not affect her patrolling strategy. Note that since we have taken $m\geq 2$ for the attack duration, reflecting from the ends will further imply that the Attacker should never attack at the center since in that case the Patroller will never be away from it for two consecutive periods. First we examine some specific subcases of the game, and then we provide more general results.

\subsection{Short attack duration ($m=2$)}\label{Section2.1}
\indent

We begin our analysis with the fairly simple case when the attack duration $m=2$, since in this case we are able to evaluate explicitly the optimal value of the probability $p$, as well as algebraically prove that the optimal value of probability $s$ is equal to $1$. When the attack duration is $m=2$, an attack at location $A$ cannot be intercepted if it starts when the Patroller is  at another end node, but only if she is at the center. Thus, an attack at $A$ will be intercepted with probability $c\cdot p$, where $c$ is the probability that the Patroller is at $C$ at the beginning of the attack, which in turn implies that the Attacker should choose the waiting time $d$ to minimize $c$. The case $m=2$ has a special property that does not hold for $m>2$, that against $d=2$ the probability $s$ does not affect the interception probability (see Lemma \ref{Lemma1}). Only for $m=2$ are there reflection probabilities $s<1$ which are part of an optimal Patroller strategy. We establish the following result:

\begin{proposition}\label{Proposition1}
Consider the Uniformed Patroller game with arbitrary number $n$ of locations and attack duration $m=2$. An optimal Patroller's strategy is to reflect off the end nodes, $s=1$, and to move from the center with probability 
\begin{equation}\label{eq1}
\hat{p}=1-\frac{\sqrt{n\cdot (n-1)}}{n}\,.
\end{equation}
\noindent The optimal Attacker's strategy is to locate at a random end node and  begin the attack in the second period that the Patroller is away $(d=2)$. The value $V$ of the game is given by 
\begin{equation}\label{eq2}
V=(2\cdot n -1)-2\cdot\sqrt{n\cdot (n-1)}\,.
\end{equation}
\noindent Furthemore $d=2$ is uniquely optimal for the Attacker, while there exists a $k$ less than one such that the strategy $(\hat{p},s)$ is optimal for all $s$ in the interval $[k,1]$.
\end{proposition}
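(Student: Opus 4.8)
My plan is to reduce everything to the two ``away from $A$'' states $\{C,E\}$ of the birth--death chain in Figure~\ref{FigBD}. Let
\[
Q=\begin{pmatrix} 1-np & (n-1)p\\ s & 1-s\end{pmatrix}
\]
be the (sub-stochastic) transition matrix on $\{C,E\}$, the deficiency $p$ in the first row being the leakage $C\to A$, and put $(u_k,v_k)=(1,0)\,Q^{k}$, so that $u_0=1$, $v_0=0$, $u_1=1-np$, $v_1=(n-1)p$, and
\[
u_{k+1}=(1-np)\,u_k+s\,v_k,\qquad v_{k+1}=(n-1)p\,u_k+(1-s)\,v_k.
\]
Against delay $d$ the attack begins in the first period $T$ whose record shows $d$ consecutive absences from $A$; every ``away run'' necessarily starts with the Patroller at $C$, so by the Markov property, conditioned on the run lasting at least $d$ periods she occupies the centre at period $T$ with probability $c_d=u_{d-1}/(u_{d-1}+v_{d-1})$. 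Since $m=2$ and she is absent from $A$ at $T$ by construction, the attack is thwarted precisely when she jumps $C\to A$ at $T+1$, so the interception probability incurred by delay $d$ is exactly $c_d\,p$. In particular $c_1=1$ and $c_2=(1-np)/(1-p)$ do not involve $s$ (this is Lemma~\ref{Lemma1}), and $c_2\,p=(1-np)p/(1-p)$ is the very quantity whose maximum over $p$, attained at $\hat p$, equals $V$ in the first part of the proof.

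The heart of the matter is the identity, obtained by substituting the two recursions once,
\[
(n-1)p\,u_{d-1}-(1-np)\,v_{d-1}=\beta\,v_{d-2},\qquad \beta:=s(1-p)-(1-np),
\]
valid for all $d\ge2$. Rearranging $c_d\ge c_2$ into $(n-1)p\,u_{d-1}\ge(1-np)\,v_{d-1}$ and using that $u_k>0$ for $k\ge0$ and $v_k>0$ for $k\ge1$ (which hold at $p=\hat p$), we get that for every $d\ge3$ the sign of $c_d-c_2$ equals the sign of $\beta$, while $c_1=1>c_2$ (as $n\ge2$). Hence $d=2$ minimizes the interception probability over all delays exactly when $\beta\ge0$, i.e. when $s\ge(1-np)/(1-p)$, and it is the unique minimizing delay whenever $\beta>0$. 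At $p=\hat p$ the threshold $(1-n\hat p)/(1-\hat p)$ equals $n\hat p=n-\sqrt{n(n-1)}$ — because $\hat p$ is a root of $np^2-2np+1=0$ — and $n-\sqrt{n(n-1)}<1$ since $(n-1)^2<n(n-1)$; call this number $k$.

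It remains to combine these observations. For the uniqueness of $d=2$: against the optimal strategy $(\hat p,1)$ one has $\beta=(n-1)\hat p>0$, so $c_d>c_2$ strictly for every $d\ge3$ and $c_1>c_2$. Attacking the centre cannot be optimal (exactly as in the first part, since against $s=1$ the Patroller is never away from $C$ for two consecutive periods), so an optimal Attacker strategy is, by symmetry, a distribution $\tau$ over delays $d$ for an attack at a fixed end node; the interception probability it suffers against $(\hat p,1)$ is $\sum_d\tau(d)\,c_d(\hat p,1)\,\hat p\ge c_2(\hat p)\,\hat p=V$, and, the game being zero-sum, this must equal $V$, forcing $\tau$ to be the pure delay $d=2$. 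For the range of $s$: if $s\in[k,1]$ then $\beta\ge0$, so against $(\hat p,s)$ every Attacker response gives interception probability $c_d(\hat p,s)\,\hat p\ge c_2(\hat p)\,\hat p=V$; together with the fact (from the first part) that the Attacker's pure delay-$2$ strategy caps the interception probability at $\max_p (1-np)p/(1-p)=V$ against \emph{every} $(p,s)$, this exhibits $(\hat p,s)$ as optimal for all $s\in[k,1]$, with $k=n-\sqrt{n(n-1)}<1$.

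I expect the only genuine computation to be checking the displayed identity for $\beta\,v_{d-2}$ and the simplification of the threshold at $\hat p$; the point to get right is the timing convention — that the attack occupies periods $T,\dots,T+m-1$ with $T$ the \emph{first} period carrying $d$ recorded absences, so the Patroller is automatically absent at $T$ and the relevant conditional law at $T$ is $(1,0)Q^{d-1}$, renormalized.
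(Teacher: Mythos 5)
Your proof is correct, and while its first half (the reduction to the conditional probability $c_d$ of being at the centre at the start of the attack, the formula $Q=c_d\,p$ for $m=2$, and the optimization of $c_2\,p=(1-np)p/(1-p)$ over $p$) coincides with the paper's argument, your treatment of the two delicate claims is genuinely different and in one respect sharper. The paper proves $c_d\ge c_2$ by iterating the renormalized one-step map $f(c,s)$ and exploiting its monotonicity in $c$ and $s$, and then obtains the interval $[k,1]$ of optimal reflection probabilities by a soft second-best-plus-continuity argument that produces no explicit $k$. You instead linearize via the substochastic kernel on $\{C,E\}$ and prove the identity $(n-1)p\,u_{d-1}-(1-np)\,v_{d-1}=\bigl(s(1-p)-(1-np)\bigr)\,v_{d-2}$ --- which checks out: substituting the recursions, the $u_{d-2}$ terms cancel and the coefficient of $v_{d-2}$ is $(n-1)ps-(1-np)(1-s)=s(1-p)-(1-np)$ --- so that the sign of $c_d-c_2$ equals the sign of $\beta$ uniformly over all $d\ge 3$. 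This buys you an explicit threshold $k=(1-n\hat p)/(1-\hat p)=n\hat p=n-\sqrt{n(n-1)}<1$ (the simplification using $n\hat p^2-2n\hat p+1=0$ is right), which the paper does not supply, and it makes the unique optimality of $d=2$ against $(\hat p,1)$ immediate since there $\beta=(n-1)\hat p>0$. The one loose end --- shared with the paper --- is that in your final minimax step ``every Attacker response'' should also cover an attack at the centre; for $(\hat p,s)$ with $s\in[k,1]$ such an attack, whenever it begins, is intercepted with probability $s\ge k\ge 2k-1=V$, so nothing changes, but the line is worth adding.
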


\begin{proof}
We wish to calculate how the probability that the Patroller is at $C$ changes over time, as she continues to be away from the Attacker's attack location $A$. Suppose that at some period $t$ the Patroller is not at $A$, but she is either at $C$ with probability $c$, or belongs to the set $E$ of remaining $n-1$ locations with probability $1-c.$ Then, in the following period $t+1$ the Patroller will be at $C$ with probability $c\cdot r +\left( 1-c\right)\cdot s$, or at $A$ with probability $c\cdot p +(1-c)\cdot 0$. Hence, conditional on the Patroller not being at location $A$ at period $t+1$, the probability that she is at her base $C$ at period $t+1$ is given by
\begin{equation}\label{eq3}
f(c,s)=\frac{c\cdot r+(1-c)\cdot s}{1-p\cdot c}\,.
\end{equation}
\noindent Fraction \eqref{eq3} is increasing in $s$, and $f(c,s)$ is maximized for $s=1$ (we later provide a non-algebraic proof that the Patroller should always adopt $s=1$, see Section \ref{Section2.5}) at
\begin{equation}\label{eq4}
f(c)=f(c,1)=\frac{c\cdot r+(1-c)}{1-p\cdot c}=\frac{1-n\cdot p\cdot c}{1-p\cdot c}\,.
\end{equation}

Additionally, since from equation \eqref{eq4} we find that $f^{'}(c)=-\frac{p\cdot (n-1)}{(1-p\cdot c)^{2}}<0$, then $f(c)$ is strictly decreasing and hence minimized for $c=1$, with $f(1)=\hat{c}$, where
\begin{equation*}
\hat{c}=\frac{(1-n\cdot p)}{(1-p)}\,.
\end{equation*}
\noindent This is easily explained. We have shown that it is optimal for the Patroller to reflect at the ends ($\hat s=1$) so the probability that the Patroller is at the center, given that she was at the center with probability $c$ last period, and has not appeared at the Attacker's end, is simply given by $f(c)$. Since we have also shown that $f(c)$ is decreasing in $c$, and since $c$ is a probability, then $f(c)$ is minimized when $c$ is the maximum probability of being at the center, namely $c=1$. This occurs in the first period that the Patroller has left the Attacker's end. So the minimum probability $\hat{c}$ of the Patroller being at the center occurs two period after she has left the Attacker's end. Hence the optimal delay is $d=2$, regardless of the Patroller's choice of $p$.

An attack at node $A$ with delay $d=2$ will be intercepted if the Patroller is at $C$ in its first period and she goes to $A$ in its second period, that is, with probability
\begin{equation}\label{eq5}
Q=\hat{c}\cdot p=\frac{(1-n\cdot p)\cdot p}{1-p}\,.
\end{equation}

For a number $n$ of candidate attack locations, the Patroller chooses the value of $p$ that maximizes the interception probability \eqref{eq5}. The optimal value $\hat{p}=\hat{p}(n)$ for $p$ depends on $n$ and can be found by solving the first order equation
\begin{equation*}
\frac{\partial Q}{\partial p}=\frac{(n\cdot p^{2}-2\cdot n\cdot p+1)}{(1-p)^{2}}=0\,.
\end{equation*}
\noindent This is equivalent to solving
\begin{equation*}
n\cdot p^{2}-2\cdot n\cdot p +1=0\, ,
\end{equation*}
\noindent giving
\begin{equation}\label{eq6*}
\hat{p}=1-\frac{\sqrt{n\cdot (n-1)}}{n}\,.
\end{equation}

The above findings show that the optimal probability $\hat{p}$ is asymptotic to $\frac{1}{(2\cdot n)}$, while the optimal probability $\hat{r}$ goes to $\frac{1}{2}$. Using \eqref{eq5},\eqref{eq6*}, and the formula for $\hat{c}$, the value $V$ of the game is then given by
\begin{equation*}
V=\hat{c}\cdot\hat{p}=(2\cdot n -1)-2\cdot\sqrt{n\cdot (n-1)} \,.
\end{equation*}

To establish the last part of the Proposition, let $c\left( d\right) =f^{\left( d-1\right) }\left( 1\right)$, with $c(1) =1$, denote the probability that the Patroller is at $C$ after $d$ consecutive periods away from $A$, when she adopts strategy $s=1$ and $p=\hat{p}$.
Hence $c(2) = f(1)=\hat{c}<1$.
We showed that in this case the interception probability is $Q=c\left( d\right) \cdot \hat{p}$ and that $c\left( d\right) $ is minimized when $d=2$ at $V=c\left( 2\right) \cdot \hat{p}=\hat{c}\cdot \hat{p}$.
It is clear that $c\left( d\right) =1$ only for $d=1$ since $p \neq 0$ (i.e. $r\neq 1$).
Also, since the patroller will adopt a strategy with $p = \hat p < 1/n$ the Markov chain is aperiodic and hence $c\left( d\right) $ converges to
some limit $c_{\infty }$ as $d$ goes to infinity (see for example \cite{Norris}). 
We know that $c_{\infty }$ cannot be $1$ because $f\left( 1\right) =c\left( 2\right) <1$, and $f$ is a decreasing function, hence $c_{\infty }<1$ and $c(d) < 1$ for each $d > 1$. 
Since $f$ is strictly decreasing (as shown above), it follows that that $c(d) = f(c(d-1)) > f(1)= \hat{c}$ for $d >2$. Hence, there is some second best value of the delay $d^{\ast }$ for the Attacker, such that for each $d \neq 2$ we have $Q\left( d,s=1\right) \geq c\left( d^{\ast }\right) \cdot \hat{p}\equiv V^{\ast }>\hat{c}\cdot \hat{p} = V$.
By the continuity of $Q$ in $s$
(in the transition probabilities of the Markov chain) and the fact that $Q\left( d=2,s=1\right) =V,$ it follows that there exists a $k$ sufficiently close to $1$ such that $Q\left( d=2,s\right) >V^{\ast }$ for $s\in \left[ k,1\right]$.
Consequently if the Patroller adopts such an $s,$ the best the
Attacker can do is to choose $d=2$ and obtain an interception probability of 
$Q\left( d=2,s\right) =V.$ This means that, as claimed, $\left( s,\hat{p}%
\right) $ is optimal for the Patroller for $s\in \left[ k,1\right]
.\allowbreak $
\end{proof}

The fact that for $m = 2$ there are optimal Patroller strategies with $s < 1$ is in stark contrast to the unique optimality of $s = 1$ for larger m which we establish in Lemma \ref{Lemma1}.

\subsection{Odd attack duration ($m$ odd)}\label{Section2.2}
\indent

When the attack duration $m$ is odd, we establish the following result:

\begin{proposition}\label{Proposition2}
Consider the Uniformed Patroller game where the attack duration $m$ is odd. The unique optimal Patroller's strategy is the random walk with zero holding probabilities, $\hat p=\frac{1}{n}$, $\hat s=1$. The optimal Attacker's strategy is to attack at any end with an arbitrary delay. The value of the game is then given by
\begin{equation}\label{eq6}
V=1-\left( \frac{n-1}{n}\right) ^{(m-1)/2}\,.
\end{equation}
\end{proposition}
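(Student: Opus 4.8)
The plan is to analyze the three-state birth–death chain on $\{E,C,A\}$ induced by the Patroller's strategy and compute the interception probability for an attack launched with delay $d$, then optimize over $(p,s)$ and $d$. First I would set up the interception probability explicitly. An attack at $A$ with duration $m$, launched in the first period after the Patroller has been away for $d$ consecutive periods, is thwarted precisely if the Patroller visits $A$ during the $m$ attack periods. Because the Patroller reflects off $A$ with probability $s$ and the chain on $\{E,C,A\}$ is a birth–death chain, a visit to $A$ can only occur on the step $C\to A$, and after such a visit she returns toward $C$; so during the attack window the relevant event is "the chain, started from its conditional distribution on $\{E,C\}$ at the onset of the attack, reaches $A$ within $m$ steps." I would write $c=c(d)$ for the conditional probability of being at $C$ at the first attack period (given absence from $A$ for $d$ periods), exactly as in the $m=2$ analysis via the map $f(c,s)$ of \eqref{eq3}, and express the thwart probability $Q(d,p,s)$ as a function of $c(d)$, $p$, $s$, and $m$.

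The key simplification is that when $p=1/n$ (so $r=0$) and $s=1$, the chain has no holding at $C$ and no holding at the ends, and the conditional distribution on $\{E,C\}$ becomes stationary-like and in fact \emph{independent of $d$}: from $C$ she goes to $A$ with probability $1/n$ and to $E$ with probability $(n-1)/n$; from $E$ she goes to $C$ with certainty. Conditioned on not hitting $A$, the position alternates deterministically between $E$ and $C$, so that over any two consecutive periods she is at $C$ exactly once and the per-visit chance of then stepping to $A$ is $1/n$. Hence within an $m$-period window there are $\lceil m/2\rceil$ or $\lfloor m/2\rfloor$ opportunities depending on parity of the starting state; averaging over the onset distribution, the survival (non-interception) probability works out to $\left(\tfrac{n-1}{n}\right)^{(m-1)/2}$ when $m$ is odd, giving $V=1-\left(\tfrac{n-1}{n}\right)^{(m-1)/2}$ and showing the Attacker is indifferent to $d$. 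I would verify this by a short induction on $m$ (stepping two periods at a time), using that odd $m$ makes the count of $C$-visits the same $(m-1)/2$ regardless of whether the window starts at $E$ or at $C$ — this parity coincidence is exactly why the formula is clean for odd $m$ and why $d$ becomes irrelevant.

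Next I would prove optimality and uniqueness on both sides. For the Patroller: I must show that for \emph{every} $(p,s)$ the Attacker has some delay $d$ achieving thwart probability at most $1-\left(\tfrac{n-1}{n}\right)^{(m-1)/2}$, with equality forcing $p=1/n,s=1$. The natural route is to bound, uniformly in $p$ and $s$, the long-run fraction of attack windows in which a $C\to A$ transition occurs: the Attacker can choose $d$ large (or simply average), driving $c(d)$ to the conditional stationary value $c_\infty$, and then the relevant quantity is governed by the return-time distribution of the birth–death chain to $A$. I would argue that introducing holding ($r>0$) or reflection loss ($s<1$) only slows the chain's approach to $A$ — formally via a stochastic-domination/coupling argument on the hitting time of $A$, of the same flavor promised for Theorem \ref{Theorem1} — so the survival probability is minimized exactly at $p=1/n$, $s=1$. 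For the Attacker: since at the optimal Patroller strategy the thwart probability is independent of $d$, every delay (and every end node, by symmetry) is optimal. The uniqueness of the Patroller's optimal strategy then follows because any deviation strictly helps the Attacker for some $d$.

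The main obstacle I anticipate is the uniform bound over all $(p,s)$ — i.e. showing no combination of holding at the center and reflecting loss at the ends can beat the pure random walk against a best-responding Attacker. The delicacy is that $c(d)$ and the within-window dynamics both depend on $p$ and $s$ in opposing ways ($s<1$ lowers the chance of the immediate $C\to A$ step but can raise $c(d)$, etc.), so a crude term-by-term comparison fails; the clean way through is to phrase everything in terms of the distribution of the first hitting time $\tau_A$ of state $A$ started from the conditional onset distribution, and then show this hitting time is stochastically \emph{smallest} (hence $\Pr[\tau_A\le m]$ largest) precisely when $r=0$ and $s=1$, using a coupling that matches sample paths and observes that every holding step or failed reflection can only delay the first visit to $A$. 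That coupling is essentially the argument developed later for Theorem \ref{Theorem1}, which I would invoke here in the two-parameter setting.
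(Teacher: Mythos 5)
Your computation of the value under the pure random walk (alternating $C,E$ pattern, exactly $(m-1)/2$ independent uniform end-selections in any $m$-period window whichever state the window starts in, hence survival probability $\bigl(\tfrac{n-1}{n}\bigr)^{(m-1)/2}$ and indifference to $d$) is exactly the paper's lower-bound argument, modulo a small slip: the onset distribution is not independent of $d$ (it alternates between a point mass at $C$ and a point mass at $E$ as $d$ varies), and the number of opportunities is $(m-1)/2$ from \emph{both} starting states, not $\lceil m/2\rceil$ versus $\lfloor m/2\rfloor$; it is precisely this equality that makes $d$ irrelevant. For the optimality/uniqueness half you genuinely diverge from the paper. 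The paper argues combinatorially: for \emph{any} $(p,s)$ and any attack window of length $2j+1$ beginning away from $A$, each fresh end-visit requires a departure from $C$ and a return, so at most $j$ independent uniform selections of an end can occur, giving interception probability at most $1-\bigl(\tfrac{n-1}{n}\bigr)^{j}$, with equality forcing $r=0$, $s=1$. Your route instead extends the deletion coupling of Lemma \ref{Lemma1} to also delete holdings at $C$ (the jump chain from $C$, conditioned on moving, goes to $A$ with probability $1/n$), so the hitting time of $A$ from any \emph{fixed} starting state is pathwise dominated by that of the $(1/n,1)$ walk. This is a valid and arguably more systematic argument, and it ties the proposition to the machinery of Theorem \ref{Theorem1}; the paper's counting argument is more elementary and yields uniqueness with less bookkeeping.

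There is, however, one genuine gap in how you close the argument. You propose to show the hitting time is stochastically smallest ``started from the conditional onset distribution,'' but the onset distributions of the two chains being compared are different, and they differ in the \emph{wrong} direction: a strategy with $r>0$ places positive mass at $C$ at the onset, which by Lemma \ref{Lemma2} is the more dangerous state for the Attacker, whereas the random walk with $d\geq 2$ places all its onset mass on $E$. So no coupling that preserves the starting state can directly compare the two mixtures, and a naive ``domination from the onset distribution'' claim is false for even $m$ (where $\hat r>0$ is strictly optimal). The argument must be closed by the odd-$m$ parity fact you already isolated: for the $(1/n,1)$ walk, $\Pr(T\leq m\mid X_1=C)=\Pr(T\leq m\mid X_1=E)=V$, so domination from each fixed starting state bounds every mixture by $V$ regardless of how the onset mass is split. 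You have all the ingredients, but this step needs to be made explicit; as written, the stochastic-domination statement you invoke is not the one your coupling delivers.
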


\begin{proof}
Say the Patroller adopts a random walk with zero holding probabilities on the star $S_n$, such that she always goes to a uniform random adjacent node, and never remains at the same node. Say the attack duration is $m = 2\cdot j + 1$. In any $2\cdot j$ consecutive periods, the Patroller visits $j$ ends counting multiplicity, chosen uniformly at random. The probability that a particular end is visited among these $j$ ends is given by $1-(\frac{n-1}{n})^j$, and so this is a lower bound on the interception probability. Similarly, if the Attacker begins his attack at any time the Patroller is away from his chosen location, then the number $k$ of end nodes the Patroller visits during the rest of the attack satisfies $k \leq j$, and since these are chosen randomly (independently) the probability that the Attacker’s node is among them is given by 
\begin{equation*}
1-\Big(\frac{n-1}{n}\Big)^{k} \leq 1-\Big(\frac{n-1}{n}\Big)^j\,.
\end{equation*} 
\noindent However, since only the random walk with zero holding probabilities $p=1/n$, $s=1$ always gives $k = j$, it follows this is the unique optimal Patroller’s strategy and the value of the game is given then by \eqref{eq6}.
\end{proof}

\subsection{Attack duration $m=4$}\label{Section2.3}
\indent

We can also obtain an explicit solution to the game when $m=4$ (but not for higher even $m$). When the attack duration is $m=4$, if the attack starts when the Patroller is at the center, then she gets two chances to intercept it (i.e. to find the correct end $A$). However, if the attack starts when the Patroller is at an end $E\neq A$, then she gets only one chance. In the first case, the Patroller can intercept the attack with any of the following four sequences 
\begin{equation*}
CA\_\, \_ ,\quad CCA\_,\quad CCCA,\quad CECA,
\end{equation*}
\noindent while in the second case, with any of the following three sequences
\begin{equation*}
ECA\_,\quad ECCA,\quad EECA.
\end{equation*}

It follows that the attack will be intercepted with overall probability
\begin{gather}
\begin{aligned}\label{eq7}
Q&=c\cdot \bigl(1+r+r^{2}+( 1-p-r)\bigr)\cdot p+( 1-c)\cdot\bigl( s+s\cdot r+( 1-s)\cdot s\bigr)\cdot p
 \\&= p\cdot ( -r\cdot s+r^{2}+s^{2}-2\cdot s-p+2)\cdot c+p\cdot (
2\cdot s+r\cdot s-s^{2})\,,
\end{aligned}
\end{gather}
\noindent where $c$ is the conditional probability that the Patroller is at the center $C$ at the beginning of the attack given that she is not at the attack node $A$.

Note that the coefficient of $c$ in \eqref{eq7} is given by the product of $p$ with the expression
\begin{gather*}
\begin{aligned}
( -r\cdot s+r^{2}+s^{2}-2\cdot s-p+2)&=2+( s^{2}-2\cdot s)-r\cdot s+r^{2}-p \\
\geq 2-1+-rs+r^{2}-p &\geq 1-\left( r+p\right) +r^{2}  \geq 1-\left( r+p\right) 
\geq 0\,,
\end{aligned}
\end{gather*}
\noindent since $r+p\leq r+n\cdot p=1$, and $s\leq 1$. It follows that for fixed $n$, $p$, the interception probability \eqref{eq7} is increasing in $c$. By the same reasoning we have used for $m=2$, it further follows that the Attacker should choose to wait for $d=2$ to attain $c=\hat{c}$, and the minimum interception probability
\begin{equation}\label{eq8}
Q=\frac{p}{1-p}\bigl(( p+r-1)\cdot s^{2}+(2-r-p\cdot r-r^{2}-2\cdot p)\cdot s+( 2\cdot r-p\cdot r+r^{3})\bigr)\,.
\end{equation}

To check that \eqref{eq8} is increasing in $s$, notice that the derivative with respect to $s$ in the bracketed quadratic above is given by
\begin{equation*}
2\cdot ( p+r-1)\cdot s+( 2-r-p\cdot r-r^{2}-2\cdot p) \geq 2\cdot ( p+r-1)\cdot 1+( 2-r-p\cdot r-r^{2}-2\cdot p)\,,
\end{equation*}
\noindent since $r+p<1$, which is equivalent to $r\cdot (1- p-r) \geq 0,$ since both factors are non-negative.

Consequently, it turns out that the Patroller maximizes the interception probability for fixed $p$ by choosing $s=1$ (reflecting at the ends). Taking $s=1$ in \eqref{eq8}, gives
\begin{gather}\label{eq9}
\begin{aligned}
Q&=\frac{p}{1-p}\cdot \bigl( 2\cdot r-p-2\cdot p\cdot r-r^{2}+r^{3}+1\bigr) 
 \\
&=\frac{-n^{3}\cdot p^{4}+2\cdot n^{2}\cdot p^{3}+2\cdot n\cdot
p^{3}-3\cdot n\cdot p^{2}-3\cdot p^{2}+3\cdot p}{1-p}\,.
\end{aligned}
\end{gather}

We can also get an exact algebraic expression for the value $p=\hat{p}$ which maximizes the interception probability \eqref{eq9}, by differentiating \eqref{eq9} with respect to $p$ and setting the numerator of the resulting fraction $\frac{f(p)}{(1-p)^{2}}$ equal to $0$, where $f(p)$ is the fourth degree polynomial
\begin{equation*}
3-(6+6\cdot n)\cdot p +(3+9\cdot n+6\cdot n^2)\cdot p^2+(-4\cdot n-4\cdot n^2+4\cdot n^3)\cdot p^3+(3\cdot n^3)^4\,.
\end{equation*}

\noindent The real root of this polynomial which is a probability, is given by
\begin{equation}\label{eq10*}
\hat{p}=\frac{2\cdot n^2+\frac{G}{\sqrt{2}}-3\cdot n^2\cdot\sqrt{\frac{C}{9\cdot n^4}+\frac{E}{3\cdot n^3\cdot D}+\frac{4\cdot \sqrt{2}\cdot F}{9\cdot n^2\cdot G}-\frac{D}{3\cdot n^3}}+2\cdot n+2}{6\cdot n^2}\,,
\end{equation}
\noindent where
\begin{equation*}
\setlength{\jot}{15pt}
\begin{split}
A&=8\cdot n^6-18\cdot n^5+6\cdot n^4+9\cdot n^3-3\cdot n, \\
B&=(n-1)^6\cdot n^3\cdot (32\cdot n^3+24\cdot n^2-3\cdot n-4),\\
C&=8\cdot (n^2+n+1)^2-12\cdot n\cdot (2\cdot n^2+3\cdot n+1), \\
F&=(4\cdot n^4+2\cdot n^3+6\cdot n^2+11\cdot n+4)\cdot(n-1)^2,
\end{split}
\qquad
\begin{split}
D&=\sqrt[3]{A+2\cdot\sqrt{B}-3\cdot n+1},\\ E&=(4\cdot n^2-1)\cdot (n-1)^2, \\ G&=\sqrt{C-\frac{6\cdot n\cdot E}{D}+6\cdot n\cdot D}\,.
\end{split}
\end{equation*}

\subsection{Asymptotic analysis for $m=4$}\label{Section2.4}
\indent 

We now consider the optimal play for the attack duration $m=4$ when $n$ is large. Since in this case the probability $p$ goes to $0$, it is more transparent to work with probability $r$ of remaining at the center. We start by writing \eqref{eq9} in terms of $r$, using that $p=\frac{(1-r)}{n}$, and calling it
\begin{equation*}
\pi(n,r)=\frac{(1-r)\cdot \bigl(-1-r+2\cdot r^2+n\cdot(1+2\cdot r-r^2+r^3)\bigr)}{n\cdot(-1+n+r)}\,,
\end{equation*}
\noindent so that,
\begin{equation*}
n\cdot \pi(n,r)\rightarrow poly(r)=-r^4+2\cdot r^3-3\cdot r^2+r+1,\quad \text{as}\quad n\rightarrow \infty\,.
\end{equation*}
\noindent The first order condition on $r$ is given by the cubic
$4\cdot r^3-6\cdot r^2+6\cdot r-1=0$, with solution in $[0,1]$ of
\begin{equation}\label{11*}
\begin{split}
\hat{r}_{\infty}&=\frac{1}{2}\Bigl(1-\bigl(-1+\sqrt{2}\bigr)^{-1/3}+\bigl(-1+\sqrt{2}\bigr)^{1/3}\Bigr)\simeq 0.20196\,.
\end{split}
\end{equation}

We also have that
\begin{equation*}
\begin{split}
a\equiv poly(\hat{r}_{\infty})&=-\frac{3\cdot \Bigl(5-4\cdot \sqrt{2}-7\cdot\bigl(-1+\sqrt{2}\bigr)^{4/3}+\bigl(-1+\sqrt{2}\bigr)^{2/3}\cdot\bigl(-1+2\cdot \sqrt{2}\bigr)\Bigr)}{16\cdot \bigl(-1+\sqrt{2}\bigr)^{4/3}}\simeq 1.0944\,,
\end{split}
\end{equation*}
\noindent which implies that
\begin{equation*}
\pi(n,\hat{r}_{n})\rightarrow \frac{a}{n} \simeq \frac{1.0944}{n}\,.
\end{equation*}

Hence, we show that for large $n$ the Patroller should stay at the center about $20\%$ of the time, which ensures her an interception probability of about $\frac{1.09}{n}$. The optimal delay remains~$d=2$.

\subsection{Optimality of $s=1$ and $d=2$ for all $m,n$}\label{Section2.5}
\indent

Let us now extend our analysis to even $m\geq 6$. Firstly, it is useful to show that an attack delay of $d=2$ and a reflection probability of $s=1$ are optimal for the Attacker and the Patroller respectively. This result (Theorem \ref{Theorem1}) was conjectured by the Associate Editor.

Let $X_{i} \in \{C,E,A\}, i=1,2,\dots$, denote the location of the Patroller in the $i^{th}$ time period (where $E$ denotes the set of end nodes other than the attack node $A$). 
Then the one sided sequence $X=(X_{1},X_{2},\dots)$ is a sample path of the Patroller's Markovian strategy.
Let $T(X)=\min\{i:X_{i}=A\}$ denote the first time that the Patroller reaches the attack node $A$, this random time is typically called the hitting time of $A$.

\begin{lemma}\label{Lemma1}
Assume the Attacker adopts a delay of $d=2$. Then, for any $n,m,p$ the interception probability is maximised when $s=1$. That is, reflection at ends is an optimal response for the Patroller. Furthermore, if $m\geq 3$ any $s<1$ is not an optimal response, but for $m=2$ some values of $s<1$ are optimal.
\end{lemma}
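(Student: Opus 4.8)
The plan is to rewrite the interception probability as a hitting-time probability for the three--state birth--death chain on $\{E,C,A\}$ of Figure~\ref{FigBD}, to decompose a Patroller sample path into sojourns at $C$ and excursions to $E$, and then to run a monotone coupling in the parameter $s$. I would begin by pinning down the configuration at the start of the attack: because the Attacker uses $d=2$ and restarts his count whenever the Patroller returns to $A$, the two away-periods $t,t+1$ that trigger the attack are preceded by the Patroller being at $A$ at period $t-1$, so she is at $C$ at period $t$, and, conditioning on her not being at $A$ at period $t+1$, she is at $C$ with probability $\hat c=\tfrac{r}{1-p}=\tfrac{1-np}{1-p}$ and at a node of $E$ otherwise. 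The decisive point --- already visible from \eqref{eq3} at $c=1$ --- is that this law does not involve $s$. Hence $Q(s)$ is exactly the probability that the chain, launched from this $s$-independent distribution, is at $A$ at one of the remaining attack periods $t+2,\dots,t+m$.

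Next I would use the strong Markov property to split a path into an initial $E$-excursion (present iff the path starts in $E$), followed by alternating $C$-sojourns and $E$-excursions, each $C$-sojourn ending in a one-step departure that goes to $A$ with probability $1/n$ and to $E$ otherwise. The $C$-sojourn lengths are i.i.d.\ Geometric($np$), the departure destinations are i.i.d.\ Bernoulli$(1/n)$, and both families --- and their independence --- do not depend on $s$; only the $E$-excursion lengths do, being i.i.d.\ Geometric($s$), hence stochastically decreasing in $s$ and degenerate at the value $1$ when $s=1$. I would then realise, on one probability space and for all $s$ at once, the common starting node, the common $C$-sojourn lengths $D_1,D_2,\dots$, the common departure marks $B_1,B_2,\dots$, and i.i.d.\ uniforms $U_0,U_1,\dots$ from which the $j$-th $E$-excursion length $L_j(s)$ is read by the quantile transform, so that $s\mapsto L_j(s)$ is nonincreasing with $L_j(1)\equiv 1$. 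For $s<s'$ the two assembled trajectories visit the same alternation $C,E,C,E,\dots$ and make the same departure decisions, but the $s'$-trajectory reaches every stage --- in particular the first departure that points to $A$ --- no later than the $s$-trajectory; so it is at $A$ within the window whenever the $s$-trajectory is, i.e.\ $Q(s)\le Q(s')$, and $s'=1$ gives the optimality of reflection.

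For strict suboptimality of $s<1$ when $m\ge 3$, I would exhibit a positive-probability event on the coupled space on which the $s=1$ path intercepts inside the window while the $s$ path does not: the path starts in $E$ (probability $1-\hat c>0$, as $p>0$ and $n\ge 2$), $D_1=1$ with $B_1$ pointing to $A$ (probability $np\cdot\tfrac1n=p>0$), and the $s$-path's first $E$-excursion lasts at least $m$ periods (probability $(1-s)^{m-1}>0$); these events are independent, and on their intersection the $s=1$ path is at $A$ at period $t+3\le t+m$ while the $s$ path stays in $E$ throughout $t+1,\dots,t+m$. Hence $Q(1)-Q(s)>0$ and no $s<1$ is optimal. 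When $m=2$, by contrast, an interception against $d=2$ can only occur through a $C\!\to\!A$ step out of the single period $t+1$, so $Q(s)=\hat c\,p$ is constant in $s$ and every $s\in(0,1]$ is a maximiser.

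The step I expect to be the main obstacle is making the coupling airtight: verifying that the two assembled trajectories really carry the laws of the $s$- and $s'$-chains (which rests on the fact that, for the discrete-time chain, the holding time at $C$ and the exit node from $C$ are independent), and turning ``the $s'$-path reaches $A$ no later'' into a genuine inequality between the partial sums $\sum_k D_k+\sum_j L_j(s)$ and the same sums with $L_j(s')$, while handling honestly the off-by-one effects at the start of the path and at period $t+m$. The remaining steps are routine.
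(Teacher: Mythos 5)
Your proposal is correct and rests on the same core idea as the paper's proof: a coupling under which the only effect of $s$ is to lengthen the Patroller's sojourns at wrong end nodes, so the $s=1$ path reaches $A$ no later than the $s<1$ path, combined with the observation that the distribution at the start of the attack (at $C$ with probability $(1-np)/(1-p)$) does not depend on $s$. The paper implements the coupling by deleting consecutive repetitions of $E$ from a $(p,s)$ sample path to obtain a $(p,1)$ path, whereas you quantile-couple the geometric $E$-excursion lengths across all $s$ simultaneously (which additionally yields monotonicity of $Q$ in $s$); your strictness witness for $m\ge 3$ and the computation $Q=\hat c\, p$ for $m=2$ match the paper's argument in substance.
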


\begin{proof}
The proof uses a technique which in the probability literature is known as a Markov coupling, that is we consider simultaneously two Markov sample paths constructed using the same source of randomness. Consider any Patroller's Markov chain (patrol) strategy $(p,s)$. 
Set period $i=1$ to be the period that the Attacker begins his attack at some node we denote $A$. 
At time $i=1$ every Patroller sample path is either at $X_{1}=C$ (center node) or $X_{1}=E$ (any end node other than $A$). 
Since the Attacker adopts a delay of $d=2$, and the Patroller will surely be at $C$ in the first period away from $A$, and the Attacker will start their attack if an only if the Patroller is at either $C$ or $E$ at the next step as well, the probability that the Patroller will be at $C$ again at $i=1$ is given by the probability that the Patroller walk stays at $C$ conditioned on not moving $A$, i.e.  $Pr(X_{1}=C)=\frac{1-n\cdot p}{1-p}$, which does not depend on $s$.
%
%
Note that the Attack at $A$ beginning at time $i=1$ is intercepted if and only if $T(X) \leq m$. For any sample path $X$, let $\widetilde{X}$ be the same sequence with all consecutive appearances of $E$ deleted (after the first one). So for example, if $X=(C,E,E,E,C,C,E,C,A,\dots)$ then we have $T(X)=9$, and $\tilde{X}=(C,E,C,C,E,C,A,\dots)$ with $\widetilde{T}(X)\equiv T(\widetilde{X})=7$. 
Recall that although we lump all ends other than $A$ together, consecutive elements of $E$ visited by the Patroller must be the same node. 
It follows from this construction that the paths $\widetilde{X}$ are sample paths on $\{C,E,A\}$ with distribution that arise from a Patroller with $s=1$, i.e. with strategy $(p,1)$.
It is clear that we always have $\widetilde{T}(X) \leq T(X)$, hence if $T(X) \leq m$ we also have $\widetilde{T}(X) \leq m$. 
Since $\widetilde{T} \leq T $, and we have not changed the value of $p$ for both sample paths $X$ and $\widetilde{X}$, the Markov chain $(p,1)$ is at least as good for the Patroller as $(p,s)$.

To establish the last part of the Lemma, suppose that $m\geq 3.$ Let $K
$ be the set of sample paths of $(p,s) $
starting with $$\bigl(\overbrace{E,...,E}^{m-1},C,A\bigr)\,,$$ 
so that $T\left( X\right) =m+1$ and $\widetilde{T}\left( x\right) =3\leq m$ for $X\in K.$
Note that  $\Pr \left( K\right) =\alpha _{s}>0$ for $s<1$.
It follows, from the coupling described above, that  $Q(s=1)\geq Q(s)+\alpha _{s}>Q(s)$ for $s<1$, and so any Patroller
strategy with $s<1$ cannot be an optimal response to $d=2$. 
Notice that when $m=2$ it is not possible to construct an $s < 1$ path that fails to intercept the attacker but the coupled version with $s=1$ does intercept the attacker. Hence the proof strategy does not work in this case. We showed in Proposition \ref{Proposition1} that for $m=2$ there are values of $s<1$ which are optimal.
\end{proof}

We now consider a simple question. Is it better to begin an attack when the Patroller is at another end node $E$ or when she is at the center $C$? We show that the attack should begin when the probability of the former is higher.

\begin{lemma}\label{Lemma2}
For any Patroller strategy $(p,s)$ and any $m,n$, the interception probability if the attack starts when $X_{1}=C$ is at least as high as if $X_{1}=E$. That is
\begin{equation}\label{eq10}
Pr(T(X) \leq m \mid X_{1}=C) \geq Pr(T(X) \leq m \mid X_{1}= E)\,.
\end{equation}
\noindent Furthermore, the left and right sides of \eqref{eq10} are equal if and only if $s=1$, $r=0$ and $m$ is odd. In particular, for even $m$ it is strictly better to start the attack when the Patroller is at the center.
\end{lemma}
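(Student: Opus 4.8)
The goal is to compare the interception probability of an attack that begins when the Patroller is at $C$ with one that begins when she is at an end node $E \neq A$. My plan is to set up a coupling between two copies of the Patroller's Markov chain on $\{C,E,A\}$: one, call it $X = (X_1, X_2, \dots)$, started at $X_1 = C$, and the other, $Y = (Y_1, Y_2, \dots)$, started at $Y_1 = E$. Both chains use the same transition probabilities $(p,s)$ (so the same birth–death chain of Figure \ref{Figure4}), and I want to couple them so that $Y$ reaches $A$ no sooner than $X$ does; then $\Pr(T(X) \le m) \ge \Pr(T(Y) \le m)$, which is exactly \eqref{eq10}.

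\textbf{Construction of the coupling.} The key structural observation is that from the state $C$ the chain can move to $A$ directly (probability $p$), whereas from $E$ it must first return to $C$ (it reflects to $C$ with probability $s$, or holds at $E$ with probability $1-s$) before it can ever reach $A$. So I would run the $C$-started chain $X$ freely, and build $Y$ so that it "shadows" $X$ but is always either in the same state as $X$ or one step "behind" it — informally, $Y$ sits at $E$ (possibly holding there) until $X$ has returned to $C$, and from the first moment both chains are simultaneously at $C$, I let them move together using the same coin flips. Concretely: at time $1$, $X_1 = C$, $Y_1 = E$. While $X$ is at $C$ and $Y$ is at $E$, if the common coin tells $X$ to stay at $C$ (probability $r$) I let $Y$ stay at $E$ (it has holding probability $1-s$, so I need $r \le 1-s$, i.e. $s \le 1-r = np$... this is where care is needed — see below). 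The cleaner route, which I expect is what the paper does, is a path-surgery / first-passage decomposition: condition on the first time $\tau$ that $X$ hits $C$ after leaving it, respectively the first time $Y$ hits $C$; show that the post-$C$ future of each chain is an i.i.d. copy of the same process started from $C$; and argue that $Y$'s hitting time of $A$ stochastically dominates $X$'s because $Y$ must "waste" at least one excursion in $E$ before its first visit to $C$, while $X$ starts at $C$. Formally: $T(X)$ equals a geometric-type number of failed $C$-excursions plus a final $C \to A$ step, and $T(Y)$ equals the time to first reach $C$ from $E$ (which is $\ge 1$ and strictly positive probability of being $\ge 2$) plus an independent copy of $T(X)$.

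\textbf{The equality case.} For the "if and only if" I would trace through when the domination is strict. The $E$-started chain differs from the $C$-started one precisely in that initial sojourn in $E$: if $s = 1$ and $r = 0$, then from $E$ the Patroller deterministically moves to $C$ in one step, so $Y$ is exactly $X$ shifted by one period, giving $T(Y) = T(X) + 1$ deterministically; then $\Pr(T(Y) \le m) = \Pr(T(X) \le m-1)$, and this equals $\Pr(T(X) \le m)$ exactly when the event $\{T(X) = m\}$ has probability zero. With $s = 1, r = 0$ (the pure random walk $p = 1/n$), the parity constraint on the star forces $T(X)$ to be even (starting at $C$, the Patroller alternates $C$ and end-nodes, so $A$ is hit only at even times), hence $\{T(X) = m\} = \emptyset$ iff $m$ is odd — giving equality exactly when $m$ is odd. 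If instead $s < 1$ or $r > 0$, then either $Y$ can linger in $E$ strictly longer (when $s<1$, $\Pr(Y \text{ holds at } E) > 0$) or the chain is aperiodic (when $r > 0$) so $\Pr(T(X) = m) > 0$ for every $m \ge$ its minimal value; in either case the inequality \eqref{eq10} is strict. In particular for even $m$ one of $s<1$, $r>0$ must fail for equality, and when $s=1,r=0$ we just saw even $m$ forces strict inequality — so strictness always holds for even $m$, as claimed.

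\textbf{Main obstacle.} The delicate point is making the coupling actually valid while $X$ holds at $C$ and $Y$ holds at $E$: the holding probability at $C$ is $r = 1-np$ and at $E$ is $1-s$, and these need not be equal, so I cannot naively "use the same coin." The fix is either (i) to couple only the eventual meeting at $C$ and then run identically — which is enough for the stochastic domination — using a monotone coupling where $Y$'s state is at all times $\preceq X$'s state in the ordering $A \prec C \prec E$ (one checks the transition kernel of the three-state birth–death chain is monotone with respect to this order, which is the standard condition for such a coupling to exist); or (ii) to bypass coupling entirely and use the first-passage decomposition above, where the only fact needed is that the first hitting time of $C$ from $E$ is $\ge 1$ with positive mass on $\ge 2$, together with the strong Markov property. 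I would present route (ii) as the backbone and invoke the birth–death structure (a path from $E$ to $A$ must pass through $C$) to make the excursion argument rigorous; the equality analysis then reduces to the clean parity/aperiodicity dichotomy described above.
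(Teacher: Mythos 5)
Your ``route (ii)'' backbone --- decomposing the hitting time from $E$ as the (at least one period, possibly longer when $s<1$) passage time to $C$ plus an independent copy of the hitting time from $C$, which gives $\Pr(T\le m\mid X_1=E)\le \Pr(T\le m-1\mid X_1=C)\le \Pr(T\le m\mid X_1=C)$, and then reading off $s=1$ from equality in the first step, $r=0$ from $\Pr(T=m\mid X_1=C)=0$ in the second, and odd $m$ from the resulting parity of the walk --- is exactly the paper's proof. The coupling machinery in your opening is scaffolding you rightly discard, and the argument as you finally present it is correct and essentially identical to the paper's.
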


\begin{proof}
A Patroller at an end $E$ must go to the center $C$ before reaching the attacked node $A$. It follows that the probability of reaching $A$ from $E$ in $t$ periods is no greater than the probability of reaching $A$ from $C$ in $t-1$ periods. Hence, by the Markov property, 
\begin{equation}\label{eq11}
Pr(T(X) \leq m \mid X_{1}=E) \leq Pr(T(X) \leq m-1 \mid X_{1}=C) \leq Pr(T(X) \leq m \mid X_{1}=C)\,,
\end{equation}
\noindent because if the Patroller reaches $A$ in fewer than $m-1$ periods, then she also reaches it in fewer than $m$ periods. 

Regarding the `furthermore' part, if the left and right hand sides of \eqref{eq12} are equal, this means that both the left and right `$\leq$' signs hold with equality. If the left `$\leq$' holds with equality, we must have $s=1$ because $Pr(X_2=C|X_1=E)=1$. Now suppose the right `$\leq$' sign holds with equality, so that for $X_1=C$ we have
\begin{equation*}
Pr(T(x)\leq m-1)=Pr(T(x)\leq m)\,.
\end{equation*}
\noindent It follows that 
\begin{equation*}
Pr(T(x)=m)=0\,.
\end{equation*}
\noindent This implies that $r=0$ (the chain is periodic) because otherwise this probability is at least $r^{m-1}\cdot \frac{1-r}{n}$. When $r=0$ and $s=1$ the Patroller follows a random walk (with zero probability of remaining at any node) which will be at the center in odd periods and at an end at even periods. Since $T(x)\neq m$ she is at the center at period $m$, so $m$ must be odd.

\end{proof}

It is important to note that if the attack duration $m$ is odd, and $p=\frac{1}{n}$, then the two probabilities given in \eqref{eq10} are strictly equal. To check this, suppose we label the $m$ periods of the attack as $t=1,2,\dots,m=2k+1$. If the Patroller is at the center $C$ at the beginning $t=1$ of the attack, then she will be at a random end node (either the one that is attacked or another one) in the following $k$ even numbered periods $2,4,\dots, m-1=2k$. If, however, the Patroller is at an end node $E$ (other than $A$) at the beginning $t=1$ of the attack, then she will be at a random end node in the following $k$ odd numbered periods $3,5,\dots, m=2k+1$. In either case, the interception probability is given by $1-(\frac{n-1}{n})^k$, as seen in the proof of Proposition \ref{Proposition2}. This fact explains why the delay $d=2$, which maximizes the probability of being at the center $C$ in the first period of the attack, is not required for optimality when $m$ is odd.

\begin{lemma}\label{Lemma3}
Suppose the Patroller is adopting a strategy $(p,s)$ with $s=1$. 
Then, for any $n,m$ the interception probability is minimised when $d=2$. Furthermore, if $m$ is odd then $d=2$ is the Attacker's unique optimal response.
\end{lemma}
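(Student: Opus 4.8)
The plan is to condition on the Patroller's position in the first period of the attack and thereby reduce the whole question to one scalar. Re-index so that period $i=1$ is the first period of the attack (as in the proof of Lemma~\ref{Lemma1}). Since $s=1$, the Patroller's walk is the three-state birth--death chain on $\{E,C,A\}$ of Figure~\ref{FigBD}, and since the attack can begin only while she is away from $A$, at period $1$ she is at $C$ with probability $c(d)$ and at an end $E\neq A$ with probability $1-c(d)$. By the Markov property,
\begin{equation*}
Q(d)=c(d)\cdot\beta_C+(1-c(d))\cdot\beta_E=\beta_E+c(d)\cdot(\beta_C-\beta_E),
\end{equation*}
where $\beta_C=Pr(T(X)\le m\mid X_1=C)$ and $\beta_E=Pr(T(X)\le m\mid X_1=E)$ depend only on $p,m,n$ and not on the delay $d$. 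So it suffices to control (i) the quantity $c(d)$ as a function of $d$, and (ii) the sign of $\beta_C-\beta_E$.

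For (i) I would reuse the computation from the proof of Proposition~\ref{Proposition1}, which is valid for an arbitrary $p$ (not just $\hat p$): with $s=1$ one has $c(d)=f^{(d-1)}(1)$, where $f(c)=\frac{1-n\cdot p\cdot c}{1-p\cdot c}$ is strictly decreasing, $c(1)=1$, and $c(2)=f(1)=\frac{1-n\cdot p}{1-p}<1$. If $r>0$ then $n\cdot p<1$, a short induction gives $0<c(d)<1$ for every $d\ge 2$, and hence $c(d)=f(c(d-1))>f(1)=c(2)$ for $d\ge 3$; together with $c(1)=1>c(2)$ this makes $c(2)$ the \emph{strict} minimiser of $c(\cdot)$. (In the boundary case $r=0$, i.e. $p=1/n$, the orbit merely alternates, $c(d)=1$ for odd $d$ and $c(d)=0$ for even $d$, so $c(2)$ is still a minimiser of $c(\cdot)$, though no longer the unique one.)

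For (ii) I invoke Lemma~\ref{Lemma2}: $\beta_C\ge\beta_E$ always, so $Q(d)=\beta_E+c(d)\cdot(\beta_C-\beta_E)$ is non-decreasing in $c(d)$; since $c(2)$ minimises $c(\cdot)$, the delay $d=2$ minimises $Q$, which is the first assertion. For uniqueness, Lemma~\ref{Lemma2} further tells us that with $s=1$ the equality $\beta_C=\beta_E$ can occur only when $r=0$; hence whenever $r>0$ we have $\beta_C-\beta_E>0$, so $Q$ is \emph{strictly} increasing in $c(d)$, and combined with the strict minimality of $c(2)$ this forces $d=2$ to be the unique optimal delay. When $m$ is odd this covers every Patroller strategy with $p<1/n$; the single remaining strategy $p=1/n$ is degenerate, with $\beta_C=\beta_E$ and $Q$ independent of $d$, exactly as found in Proposition~\ref{Proposition2}.

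The only step beyond bookkeeping is the \emph{strict} global minimality of $c(2)$ in the case $r>0$: a priori the orbit $(c(d))_{d\ge 2}$ of the decreasing map $f$ could oscillate and might return to the level $c(2)$. The clean way around this — and what I expect to be the crux of the write-up — is the bound $c(d)<1$ for every $d\ge 2$ (positivity because $n\cdot p<1$, and strictly less than one because $c(d)=f(c(d-1))<f(0)=1$ whenever $c(d-1)>0$): a single application of the strictly decreasing $f$ then pins every later term strictly above $f(1)=c(2)$, so no fixed-point or convergence analysis of $f$ is needed. Everything else is a direct combination of Lemma~\ref{Lemma2} with the $c(d)$ computation already carried out for Proposition~\ref{Proposition1}.
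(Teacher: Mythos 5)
Your proof of the main claim is correct and follows essentially the same route as the paper: condition on the Patroller's position at the start of the attack, invoke Lemma~\ref{Lemma2} to reduce the problem to minimising $c(d)$, and observe that since $f$ is decreasing and $c(1)=1$ is the largest possible value, every $c(d)=f(c(d-1))$ with $d\geq 2$ is at least $f(1)=c(2)$. Your version is in fact slightly tidier than the paper's, because you make explicit the identity $Q(d)=\beta_E+c(d)\cdot(\beta_C-\beta_E)$ with $\beta_C,\beta_E$ independent of $d$, which is left implicit in the paper. Where you genuinely go beyond the paper is the ``furthermore'' (uniqueness) part: the paper's proof of Lemma~\ref{Lemma3} does not address it at all, whereas you supply a complete argument via the strict monotonicity of $f$ (giving $c(d)>c(2)$ for all $d\neq 2$ when $r>0$) together with the equality characterisation in Lemma~\ref{Lemma2} (giving $\beta_C>\beta_E$ when $r>0$). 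Your analysis also correctly exposes that the ``furthermore'' as literally stated cannot hold at $p=1/n$: there, for odd $m$, the delay is irrelevant (consistent with Proposition~\ref{Proposition2} and with Theorem~\ref{Theorem1}, which asserts unique optimality of $d=2$ only for even $m$), so uniqueness of $d=2$ really requires $r>0$, i.e.\ $p<1/n$ --- and then it holds for every $m$, not only odd $m$. This is a point the paper glosses over, and your restriction to $p<1/n$ is the right fix.
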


\begin{proof}
We will show that the probability that the Patroller is at $C$ given that she has not returned to $A$ (the attack node) when the attack begins is minimised when $d=2$. Hence, by Lemma \ref{Lemma2}, this is an optimal delay.

The proof follows exactly the proof that $d=2$ is optimal in Proposition \ref{Proposition1}, which is independent of $m$. Recall the probability $f(c)$ that the Patroller is at the center, given that she was at the center the last period with probability $c$ and has not returned the current period to the attack node $A$, is given by
\begin{equation}\label{eq12}
f(c)=\frac{c\cdot r+(1-c)}{1-c\cdot p}\,.
\end{equation}

This is easily seen to be decreasing in $c$.
Since $c=1$ (its maximum possible value) at the first period that the Patroller is away from $A$, and $f$ is decreasing, the minimum value of $f$ is attained as $f(1)=\frac{1-n\cdot p}{1-p}$ after delay $d=2$. By Lemma \ref{Lemma2}, the delay $d=2$ is the optimal response to any Patroller's strategy $p$ when $s=1$.
\end{proof}

According to Lemmas \ref{Lemma1} and \ref{Lemma3}, an attack delay of $d=2$ and a Patroller reflection probability of $s=1$ are optimal responses to each other. Thus, we have established the following result:

\begin{theorem}\label{Theorem1}
For any attack duration $m$ and any number $n$ of locations, the Attacker's strategy with delay $d=2$ and the Patroller's strategy with reflection probability $s=1$ are optimal strategies. Furthermore, the delay $d=2$ is uniquely optimal if and only if $m$ is even, and the reflection probability $s=1$ is uniquely optimal if and only if $m\geq 3$. 
\end{theorem}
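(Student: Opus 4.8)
The plan is to derive Theorem~\ref{Theorem1} by assembling Lemmas~\ref{Lemma1}--\ref{Lemma3} and Propositions~\ref{Proposition1}--\ref{Proposition2}, the point being that Lemmas~\ref{Lemma1} and \ref{Lemma3} already say that $d=2$ and $s=1$ are mutual best responses, which for a zero-sum game yields a saddle point. Write $Q(p,s,d)$ for the interception probability against an attack on a uniformly random end, as a function of the Patroller's parameters $(p,s)$ and the Attacker's delay $d$ (suppressing the fixed $m,n$), and let $\hat p$ maximise $p\mapsto Q(p,1,2)$ over $p\in[0,1/n]$; such a maximiser exists and satisfies $\hat p>0$, since $Q(\cdot,1,2)$ is continuous with $Q(0,1,2)=0$ and is positive for $p>0$. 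I would then check that $\bigl((\hat p,1),\,d=2\bigr)$ is a saddle point: Lemma~\ref{Lemma1} gives $Q(p,s,2)\le Q(p,1,2)\le Q(\hat p,1,2)$ for all admissible $(p,s)$, and Lemma~\ref{Lemma3} gives $Q(\hat p,1,d)\ge Q(\hat p,1,2)$ for all $d$. Setting $V=Q(\hat p,1,2)$, the inequalities $\max_{(p,s)}\min_d Q\ge V\ge\min_d\max_{(p,s)}Q$ together with $\max\min\le\min\max$ show $V$ is the value of the game and that delay $d=2$ and reflection probability $s=1$ each belong to an optimal strategy. This is the first assertion.

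For the uniqueness of $s=1$ when $m\ge3$: in a zero-sum game an optimal strategy of one player attains the value against any optimal strategy of the other, so every optimal Patroller strategy $(p,s)$ satisfies $Q(p,s,2)=V$. By the strict ``furthermore'' inequality of Lemma~\ref{Lemma1}, for $m\ge3$ any $s<1$ gives $Q(p,s,2)<Q(p,1,2)\le V$, a contradiction; hence $s=1$ in every optimal Patroller strategy. Conversely, for $m=2$ Proposition~\ref{Proposition1} already exhibits optimal Patroller strategies $(\hat p,s)$ with $s$ ranging over an interval $[k,1]$ with $k<1$, so $s=1$ is not uniquely optimal there; since $m\ge2$ this is the full equivalence.

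For the uniqueness of $d=2$ when $m$ is even: by the same zero-sum reasoning, any optimal (possibly mixed) Attacker strategy is concentrated on delays $d$ with $Q(\hat p,1,d)=V$, so it suffices to show $Q(\hat p,1,d)>V$ for every $d\ne 2$. The argument in the proof of Proposition~\ref{Proposition1} reused in Lemma~\ref{Lemma3} identifies the conditional probability that the Patroller is at $C$ when the attack begins as $c(d)=f^{(d-1)}(1)$ with $f$ strictly decreasing; provided $\hat p<1/n$ (equivalently $\hat r>0$) this gives $c(1)=1>c(2)=\hat c$ and $c(d)=f(c(d-1))>f(1)=\hat c$ for $d\ge 3$, so $d=2$ is the strict minimiser of $c(d)$. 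By Lemma~\ref{Lemma2} the interception probability when the attack starts from the distribution ``$C$ with probability $c(d)$, otherwise $E$'' equals $c(d)\,\Pr(T\le m\mid X_1=C)+(1-c(d))\,\Pr(T\le m\mid X_1=E)$, and for even $m$ the ``furthermore'' part of Lemma~\ref{Lemma2} makes this strictly increasing in $c(d)$; hence $Q(\hat p,1,d)>Q(\hat p,1,2)=V$ for $d\ne 2$. Conversely, for odd $m$ Proposition~\ref{Proposition2} says that every delay is optimal for the Attacker, so $d=2$ is not unique, and the equivalence follows.

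The step needing the most care is the hypothesis $\hat p<1/n$ (i.e.\ $\hat r>0$) used above for even $m$: were $\hat p=1/n$, the three-state chain would be periodic, $c(d)$ would vanish for \emph{every} even $d$, and uniqueness of $d=2$ would genuinely fail. This is immediate from the explicit optimisers \eqref{eq6*} for $m=2$ and \eqref{eq10*} for $m=4$; for general even $m$ it must be extracted from the general solution of the game in Section~\ref{Section2}, or shown directly by verifying that $p=1/n$ is never an optimal Patroller response when $m$ is even. A smaller subtlety is the passage from ``$d=2$ is the unique best response to $(\hat p,1)$'' to ``$d=2$ is the unique optimal Attacker strategy'', which uses that the value is attained at the pure saddle $\bigl((\hat p,1),d=2\bigr)$, forcing any optimal Attacker mixture onto best responses to $(\hat p,1)$.
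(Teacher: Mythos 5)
Your proof takes essentially the same route as the paper: Lemmas \ref{Lemma1} and \ref{Lemma3} make $s=1$ and $d=2$ mutual best responses, the zero-sum structure turns this Nash equilibrium into optimality of the strategies, and the uniqueness claims are read off from the ``furthermore'' parts of those lemmas together with Propositions \ref{Proposition1} and \ref{Proposition2}. The one place where you go beyond the paper is the right place: the unique optimality of $d=2$ for even $m$ rests in the paper entirely on the ``furthermore'' clause of Lemma \ref{Lemma3}, which is mis-stated there (it says ``odd'' where the theorem requires ``even'') and is not actually argued in that lemma's proof, whereas your reduction to the strict inequality of Lemma \ref{Lemma2} for even $m$ combined with the strict minimality of $c(2)$ among the $c(d)$ supplies the missing argument. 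The hypothesis $\hat p<1/n$ that you flag is genuinely needed (at $p=1/n$ the reduced chain is periodic, $c(d)=0$ for every even $d$, and all even delays tie), and the paper verifies it only explicitly for $m=2$ and $m=4$ (via \eqref{eq6*} and \eqref{eq10*}) and numerically for larger even $m$, so your proposal is, if anything, more candid about that residual step than the source.
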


\begin{proof}
Lemmas \ref{Lemma1} and \ref{Lemma3} establish that this pair of strategies forms a Nash equilibrium, with a corresponding payoff given by the interception probability $\hat{Q}$. Since this is a zero-sum game, the strategies in any Nash equilibrium are optimal and the payoffs they produce is the value $V=\hat{Q}$ of the game. That is, playing the claimed optimal strategies ensures that a player gets an interception probability (payoff) at least as good as $\hat{Q}$. The `furthermore' part follows from the `furthermore' parts of Lemmas \ref{Lemma1} and \ref{Lemma3} and Proposition \ref{Proposition2}.
\end{proof}

Theorem \ref{Theorem1} was conjectured by the anonymous Associate Editor, who also suggested a proof technique somewhat different from the one given above.

\subsection{A general formula for the interception probability $Q$}\label{Section2.6}
\indent

In Theorem \ref{Theorem1} we showed that the optimal attack delay $d$ for the Attacker and the optimal reflecting from the end nodes probability $s$ for the Patroller are $d=2$ and $s=1$ respectively. 
Assuming that the Patroller and Attacker adopt these values we are able to derive an explicit formula, in terms of probability $p$, for the interception probability $Q$, for any attack duration $m$ and any star network size $n$. We obtain this formula either by a direct analysis of the hitting time, $T$, through probability generating functions, or by a recursive analysis on $m$. We also note that using the former method it is possible to derive an explicit expression for $Q$ for all values of $d$ and $s$ as well (in complete generality), but it does not appear to be possible to perform explicit algebraic optimisation, and so we only present the analysis in the simpler setting of $d=2$ and $s=1$ for the sake of clarity of the presentation.

\begin{proposition}\label{Proposition3}
Suppose the players adopt the strategies with delay $d=2$ and reflection probability $s=1$ (shown to be optimal in Theorem \ref{Theorem1}). Then the probability that an attack of length $m$ is intercepted on the star network $S_n$ when the Patroller goes to each node from the center with probability $p$, is given by
\begin{equation}\label{eq15*}
Q=1-\frac{(1-2\cdot p+n\cdot p+u)\cdot w_{2}^{m}-(1-2\cdot p+n\cdot p-u)\cdot w_{1}^{m}}{2\cdot (1-p)\cdot u},
\end{equation}
\noindent where $u=\sqrt{(n\cdot p +1)^2 -4\cdot p}$, $w_1=\frac{1-n\cdot p-u}{2}$ and $w_2=\frac{1-n\cdot p+u}{2}$\,.
\end{proposition}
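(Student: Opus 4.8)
The plan is to fix the optimal choices $d=2$ and $s=1$ and work with the three-state birth-death chain on $\{E,C,A\}$ of Figure \ref{Figure4}: from $C$ the Patroller goes to $A$ with probability $p$, to $E$ with probability $(n-1)\cdot p$, and holds at $C$ with probability $r=1-n\cdot p$; from $E$, since $s=1$, she moves to $C$ with probability $1$; and $A$ is treated as absorbing, since the attack (which begins at period $i=1$) is intercepted exactly when the hitting time $T=\min\{i:X_i=A\}$ satisfies $T\leq m$. From the $d=2$ analysis already carried out in Proposition \ref{Proposition1} and Lemma \ref{Lemma1}, the conditional law of the starting state is $\Pr(X_1=C)=\tfrac{1-n\cdot p}{1-p}$ and $\Pr(X_1=E)=\tfrac{(n-1)\cdot p}{1-p}$.

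For $j\geq 0$ let $b_j$ denote the probability that the chain started at $C$ reaches $A$ within $j$ transitions. A one-step decomposition from $C$ -- straight to $A$, or hold at $C$, or detour to $E$ and bounce straight back -- yields the recurrence
\begin{equation*}
b_j=p+r\cdot b_{j-1}+(n-1)\cdot p\cdot b_{j-2}\quad(j\geq 2),\qquad b_0=0,\ b_1=p\,.
\end{equation*}
Conditioning on $X_1$, and noting that from $C$ (resp.\ from $E$, after the forced step to $C$) at period $1$ there remain $m-1$ (resp.\ $m-2$) transitions before the deadline, one obtains
\begin{equation*}
Q=\frac{1}{1-p}\Bigl((1-n\cdot p)\cdot b_{m-1}+(n-1)\cdot p\cdot b_{m-2}\Bigr)\,.
\end{equation*}
As a sanity check, $m=2$ gives $b_0=0$, $b_1=p$, hence $Q=\tfrac{(1-n\cdot p)\cdot p}{1-p}$, agreeing with \eqref{eq5}.

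Next I would solve the recurrence in closed form. Its characteristic polynomial is $x^2-r\cdot x-(n-1)\cdot p$, and its two roots are precisely $w_1,w_2$ from the statement: indeed $w_1+w_2=1-n\cdot p=r$, $w_1\cdot w_2=-(n-1)\cdot p$, and the discriminant is $u^2=(n\cdot p+1)^2-4\cdot p$. Since $p>0$ the Patroller hits $A$ with probability one, so the constant $1$ is the particular solution and $b_j=1+A\cdot w_1^j+B\cdot w_2^j$; the initial conditions $b_0=0,\ b_1=p$ force $A=\tfrac{1-2\cdot p+n\cdot p-u}{2\cdot u}$ and $B=-\tfrac{1-2\cdot p+n\cdot p+u}{2\cdot u}$. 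Substituting into the expression for $Q$ and using the characteristic identity $r\cdot w_i+(n-1)\cdot p=w_i^2$ to collapse $(1-n\cdot p)\cdot w_i^{m-1}+(n-1)\cdot p\cdot w_i^{m-2}$ into $w_i^m$ gives $Q=1+\tfrac{A\cdot w_1^m+B\cdot w_2^m}{1-p}$, which is exactly \eqref{eq15*}. Equivalently one could use generating functions: a first-step analysis gives $E_C\bigl[z^T\bigr]=\tfrac{p\cdot z}{1-r\cdot z-(n-1)\cdot p\cdot z^2}$, and extracting the relevant partial sums of this series by partial fractions over the reciprocal roots $1/w_1,1/w_2$ yields the same formula.

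The calculations here are routine; the one thing that needs care is the index bookkeeping. The delay $d=2$ together with the conditioning on $X_1\in\{C,E\}$ shifts the effective horizon, so the main point to get right is that $Q$ involves $b_{m-1}$ and $b_{m-2}$ rather than $b_m$, and that the characteristic identity is exactly the algebraic fact that turns the combination $(1-n\cdot p)\cdot w_i^{m-1}+(n-1)\cdot p\cdot w_i^{m-2}$ back into a single power $w_i^m$, with the precise coefficients $1-2\cdot p+n\cdot p\pm u$ emerging from $A$ and $B$. Recognising at the outset that $w_1,w_2$ are the characteristic roots -- so that the Vieta relations $w_1+w_2=r$ and $w_1\cdot w_2=-(n-1)\cdot p$ are available -- is what keeps the final algebra short.
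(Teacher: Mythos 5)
Your proposal is correct and is essentially the paper's second proof of Proposition \ref{Proposition3}: the paper runs the same recursion on the attack length for the survival probability $h(m)=1-b_{m-1}$, identifies $w_1,w_2$ as the roots of the same characteristic equation $z^2-r\cdot z-q=0$, and combines the $C$- and $E$-started cases with the same conditional weights $\tfrac{1-n\cdot p}{1-p}$ and $\tfrac{(n-1)\cdot p}{1-p}$; your reindexing by number of transitions and your use of the inhomogeneous form with particular solution $1$ are only cosmetic differences. The generating-function alternative you sketch at the end is the paper's first proof.
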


We give two proofs of Proposition \ref{Proposition3}. 
The first one involves a direct calculation based on the hitting times of birth-death chains, using generating functions and applies more generally. 
The second approach involves a recursion on the attack duration $m$ and relies strongly on the fact we take $d=2$ and $s=1$.
The first approach makes use of the fact that the distribution of hitting times for birth-death chains can be expressed explicitly in terms of their probability generating function, see for example \cite{Fill}.
Birth-death chains are Markov chains on the natural numbers which can increase or decrease in each time step by at most one - the Patrollers Markov strategy on the reduced state space $\{C,E,A\}$ is therefore a special case of a birth-death chain.

\noindent \textit{First Proof of Proposition \ref{Proposition3} (based on hitting time analysis).}  In the first period that the Patroller leaves $A$ the Attacker starts to count their delay, and the Patroller must be at the centre $C$. 
Let $T_C(X)=\min\{i\,;\,X_i=A\}$ denote the first hitting time of the attacked node $A$ for the Patroller Markov strategy $X$ on the reduced space $\{C,E,A\}$, started from the centre node $C$.
If the Patroller returns to $A$ fewer than $d$ periods later then the attack will not begin and the Attacker resets their delay counter.
On the other hand, if the Patroller is away from $A$ for $d$ or more periods then the attack will begin, and it will be intercepted if and only if the Patroller returns to $A$ strictly before time period $d+m -1$ (the minus one appear because we consider that the attack has begins on the $d^{th}$ period that the Patroller is away).
It follows that the interception probability $Q$ can be expressed as
\begin{equation}\label{eq15}
\begin{split}
Q&=Pr(T_{C} < d+m-1 \mid T_{C} \geq d)\\&=1-Pr(T_{C}\geq d+m-1 \mid T_{C}\geq d) = 1- \frac{Pr(T_{C}\geq d+m-1)}{Pr(T_{C}\geq d)}\,.
\end{split}
\end{equation}
\noindent Hence, if we know the distribution of the random time $T_C$ then it is possible to calculate the probability $Q$.
It turns out that for birth-death chains the probability generating function of the hitting times (which uniquely determines the distribution) can be written explicitly in terms of eigenvalues of the Markov transition matrix. This result is summarised in the following result taken from \cite{Fill}.

\bigskip Theorem 1.2 in \cite{Fill}.
\textit{Consider a discrete-time birth-death chain with transition matrix $P$ on state space $\{0,\ldots,N\}$ started at $0$, with positive birth probabilities $p_i$, $0\leq i \leq N-1$ and death probabilities $q_i$, $1\leq i\leq N-1$. The the first hitting time, $\tau$, of state $N$ has probability generating function}
\begin{align*}
    G_\tau(z) =\mathbb{E}[z^\tau]= \sum_{t=1}^{\infty}z^t Pr(\tau = t) = \prod_{j=0}^{N-1} \frac{(1-\theta_j)z}{1-\theta_j z}\,,
\end{align*}
\textit{where $-1\leq \theta_i<1$ are the $N$ non-unit eigenvalues of $P$.}

\bigskip To apply this theorem in our setting we identify the set, $E$, with $0$, the centre node, $C$, with $1$ and the attacked node, $A$, with $N=2$ (see Figure \ref{FigBD} right).
Then, denoting by $T_E(X)=\min\{i\,;\,X_i=A\}$ the first hitting time of the attacked node $A$ for the Patroller Markov strategy $X$ started from $E$,  by Fill's Theorem  the  probability generating function of $T_E$ is given by
\begin{align}
    \label{eq:TEgen}
    G_{T_E}(z) = \frac{(1-\theta_{-})\cdot z}{1-\theta_{-}\cdot z} \cdot \frac{(1-\theta_{+})\cdot z}{1-\theta_{+}\cdot z},
\end{align}
where $-1\leq \theta_{-}$, $\theta_{+} < 1$ are the non-unit eigenvalues of the Markov transition matrix 
\begin{equation}\label{eq17}
  P = \bordermatrix{~  &  E & C & A
                       \cr
                    E & 1-s & s & 0  \cr
                    C & (n-1)p & 1-np & p \cr
                    A & 0 & 0 & 1\cr
                    }\,.
\end{equation}
\noindent The non-unit eigenvalues of the transition matrix \eqref{eq17} are given by
\begin{equation*}
\theta_{\pm}=\frac{1}{2}\cdot \bigl(2-s-n\cdot p\pm \sqrt{(s+n\cdot p)^{2}-4\cdot p\cdot s}\bigr)\,.
\end{equation*}

Finally, in order to calculate the generating function for the hitting time $T_C$ (the hitting time of $A$ started from $C$), we observe that for the Patroller to reach $A$ from $E$ it must first reach the state $C$.
The Patroller started from $E$ will first take an amount of time which is geometrically distributed with parameter $s$ before it reaches $C$, and then it will take a further (independent) random time $T_C$ before reaching $A$.
It follows, by the strong Markov property, that we have the following equality in distribution;
\begin{equation*}
T_{E}=S+T_{C}\,,
\end{equation*}
\noindent where $S$ is a Geometric$(s)$ random variable independent of $T_C$.
Since $S$ is a geometric random variable it's probability generating function is given by
\begin{align*}
    G_{S}(z) = \frac{s\cdot z}{1-(1-s)\cdot z}\,.
\end{align*}
\noindent The probability generating function of the sum of independent random variables is given by the product of the generating functions, and hence we have
\begin{align*}
    G_{T_C}(z) = \frac{G_{T_E}(z)}{G_S(z)} = \Bigl(\frac{1-(1-s)\cdot z}{s\cdot z}\Bigr)\cdot \Bigl(\frac{(1-\theta_{-})\cdot z}{1-\theta_{-}\cdot z}\Bigr) \cdot \Bigl(\frac{(1-\theta_{+})\cdot z}{1-\theta_{+}\cdot z}\Bigr) \,.
\end{align*}
\noindent Under the assumptions $d=2$ and $s=1$, which have been shown to be optimal in Theorem \ref{Theorem1}, the expression above can be simplified into the form
\begin{align}\label{eq21*}
G_{T_{C}}(z)&=\cdot\frac{4\cdot p \cdot z}{\left(2-(1-n\cdot p-u)\cdot z\right)\cdot\left(2-(1-n\cdot p+u)\cdot z\right)}\nonumber\\
&=\frac{p\cdot z}{1-z\cdot (1-n\cdot p)-z^2\cdot (n\cdot p -p)}\,,
\end{align}
\noindent where we recall $u =\sqrt{(n\cdot p +1)^2-4\cdot p}$.
The probability mass function of the discrete random variable $T_{C}$ can then be calculated in the standard way from the probability generating function, $Pr(T_{C}=k)=G^{(k)}_{T_{C}}(0)/k!$ where $G^{(k)}_{T_{C,A}}(0)$ is the $k^{th}$ derivative of $G_{T_C}(z)$, evaluated at $z=0$.
Taking derivatives in \eqref{eq21*} we have
\begin{equation}
\label{eq:PrT}
Pr(T_{C}=k)=\frac{2^{-k}\cdot p\cdot \bigl((1-n\cdot p+u)^k-(1-n\cdot p-u)^k\bigr)}{u}=\frac{p\cdot (w_{2}^{m}-w_{1}^{m})}{u}\,,
\end{equation}
\noindent where  $w_1=\frac{1-n\cdot p-u}{2}$ and $w_2=\frac{1-n\cdot p+u}{2}$.

Finally plugging the form of the probability mass function for the hitting time $T_C$ give in \eqref{eq:PrT} into the formulation of $Q$ in terms of  $T_C$, in \eqref{eq15}, we get an explicit expression of the interception probability $Q$ as a function of $m,n,p$ (for $s=1$,~$d=2$)
\begin{equation*}
\begin{split}
Q&=1-\frac{Pr(T_{C,A} \geq m+1)}{Pr(T_{C,A} \geq 2)}\\&=1-\frac{1-\bigl(Pr(T_{C,A}=1)+Pr(T_{C,A}=2)+\dots+Pr(T_{C,A}=m)\bigr)}{1-Pr(T_{C,A}=1)}\\
&=1-\frac{1-\bigl(\frac{G^{(1)}_{T_{C,A}}(0)}{1!}+\frac{G_{T_{C,A}}^{(2)}(0)}{2!}+\dots+\frac{G_{T_{C,A}}^{(m)}(0)}{m!}\bigr)}{1-\frac{G^{(1)}_{T_{C,A}}(0)}{1!}}=1-\frac{1-\frac{p}{u}\cdot \sum_{n=1}^{m}(w_{2}^n-w_{1}^n)}{1-p}\\
&=1-\frac{(1-2\cdot p+n\cdot p+u)\cdot w_{2}^{m}-(1-2\cdot p+n\cdot p-u)\cdot w_{1}^{m}}{2\cdot (1-p)\cdot u}\,.\pushQED{\qed}\qedhere
\end{split}
\end{equation*}

\noindent \textit{Second Proof of Proposition \ref{Proposition3} (based on a recursion on $m$).}
In this proof we insist that $s=1$ and $d=2$ from the beginning.
Let $h(m)$ denote the probability that a Patroller, who is at the center $C$ in the first period of an attack lasting for $m$ periods, does not intercept the attack (i.e. the probability the attack is successful). Recall $p$, $q=(n-1)\cdot p$ and $r=1-n\cdot p$ denote the probabilities of the Patroller going from the center $C$ to $A$, $E$ and $C$ respectively, where $E$ stands for the set of end nodes other than the attack node $A$. 
If the attack lasts for $1$ period, then it will clearly not be intercepted, so $h(1)=1$. 
If, however, the attack lasts for $2$ periods, then it will be successful only if the Patroller stays at the center or goes to another node $E$, so we have $h(2)=r+q$. 
For general attack length $m$, if the Patroller stays at the center the attack will be successful with probability $h(m-1)$, and if she goes to an end $E$ (after which she always comes back to $C$ since $s=1$) the attack will succeed with probability $h(m-2)$, as this return trip takes $2$ periods. Thus, we have a recursion for $h(m)$ consisting of 
\begin{equation}\label{eq23}
h(m)=r\cdot h(m-1)+q\cdot h(m-2)\,,
\end{equation}
\noindent with initial conditions $h(1)=1$, $h(2)=q+r$.

According to the elementary theory of recursions, see e.g. Lecture $2$ of \cite{Brousseau}, the solution for $h$ is given by the function 
\begin{equation}
h(m)=a\cdot w_{1}^{m}+b\cdot w_{2}^{m}\,,
\end{equation}
\noindent for some $a,b$, where $w_{1},w_{2}$ with $w_{1}<w_{2}$, are the two distinct roots of the characteristic equation of \eqref{eq23}, that is
\begin{equation*}
z^2-r\cdot z-q=0\,.
\end{equation*}

Solving the simultaneous equations 
\begin{equation*}
\begin{split}
h(1)&=a\cdot w_{1}+b\cdot w_{2}=1\quad,\quad
h(2)=a\cdot w_{1}^{2}+b\cdot w_{2}^{2}=1-p\,,
\end{split}
\end{equation*}
\noindent we evaluate the constants $a$ and $b$, as
\begin{equation*}
a=-\frac{(1-2\cdot p+n\cdot p-u)}{u\cdot (1-n\cdot p-u)}\quad,\quad b=\frac{(1-2\cdot p+n\cdot p+u)}{u\cdot (1-n\cdot p+u)}\,.
\end{equation*}

Now we relax the assumption that the Patroller is at $C$ in the first period of the attack. If the Patroller is at end node $E$($\neq A$) at the start of an attack of length $m$, then she will be at $C$ after one more period, which is equivalent to being at $C$ in the first period of an attack of length $m-1$. Thus, in this case the attack will succeed with probability $h(m-1)$, by the definition of $h$. Since the attack will start after a delay $d=2$, the probability that the attacker is at the center in the first period of the attack is given by $c=\frac{(1-n\cdot p)}{(1-p)}$. It follows that the overall probability $1-Q$ of a successful attack of length $m$ is given by the formula
\begin{equation}
\begin{split}
1-Q&=c\cdot h(m)+(1-c)\cdot h(m-1)\\
&=c\cdot (a\cdot w_{1}^{m}+b\cdot w_{2}^{m})+(1-c)\cdot (a\cdot w_{1}^{m-1}+b\cdot w_{2}^{m-1})\\
&=\bigl(c\cdot a+\frac{(1-c)\cdot a}{w_{1}}\bigr)\cdot w_{1}^{m}+\bigl(c\cdot b+\frac{(1-c)\cdot b}{w_{2}}\bigr)\cdot w_{2}^{m}\,.
\end{split}
\end{equation}

We need to show that the coefficients of $w_{1}^{m}$ and $w_{2}^{m}$ are as given in \eqref{eq15*}.
\begin{equation*}
\begin{split}
c\cdot a+\frac{(1-c)\cdot a}{w_1}&=
-\frac{1-n\cdot p}{1-p}\cdot \frac{1-2\cdot p+n\cdot p-u}{u\cdot (1-n\cdot p-u)}-\frac{n\cdot p-p}{1-p}\cdot \frac{1-2\cdot p+n\cdot p-u}{u\cdot (1-n\cdot p-u)}\cdot\frac{2}{1-n\cdot p-u}\\
&=\frac{-(1-2\cdot p+n\cdot p-u)}{2\cdot (1-p)\cdot u}\cdot \frac{2\cdot (1-u+n^2\cdot p^2+n\cdot p\cdot u-2\cdot p)}{(1-n\cdot p-u)^2}\\
&=\frac{-(1-2\cdot p+n\cdot p-u)}{2\cdot (1-p)\cdot u}\cdot \frac{2\cdot (1-u+n^2\cdot p^2+n\cdot p\cdot u-2\cdot p)}{1-2\cdot n\cdot p+2\cdot n\cdot p\cdot u-2\cdot u+n^2\cdot p^2+u^2}\\
&=\frac{-(1-2\cdot p+n\cdot p-u)}{2\cdot (1-p)\cdot u}\,,
\end{split}
\end{equation*}
\noindent since the second factor in the second last line is seen to be equal to $1$ after making the substitution $u^2=n^2\cdot p^2+2\cdot n\cdot p-4\cdot p+1$.\qed

In Figure \ref{Figure2} we plot the interception probability $Q$ as a function of $p\in [0,1/n]$, for even $m=2,4,8,10$ (black) and for odd $m=3,5,7,9$ (green) attack duration $m$.

\begin{figure}[H]
\begin{center}
    \includegraphics[scale=0.4]{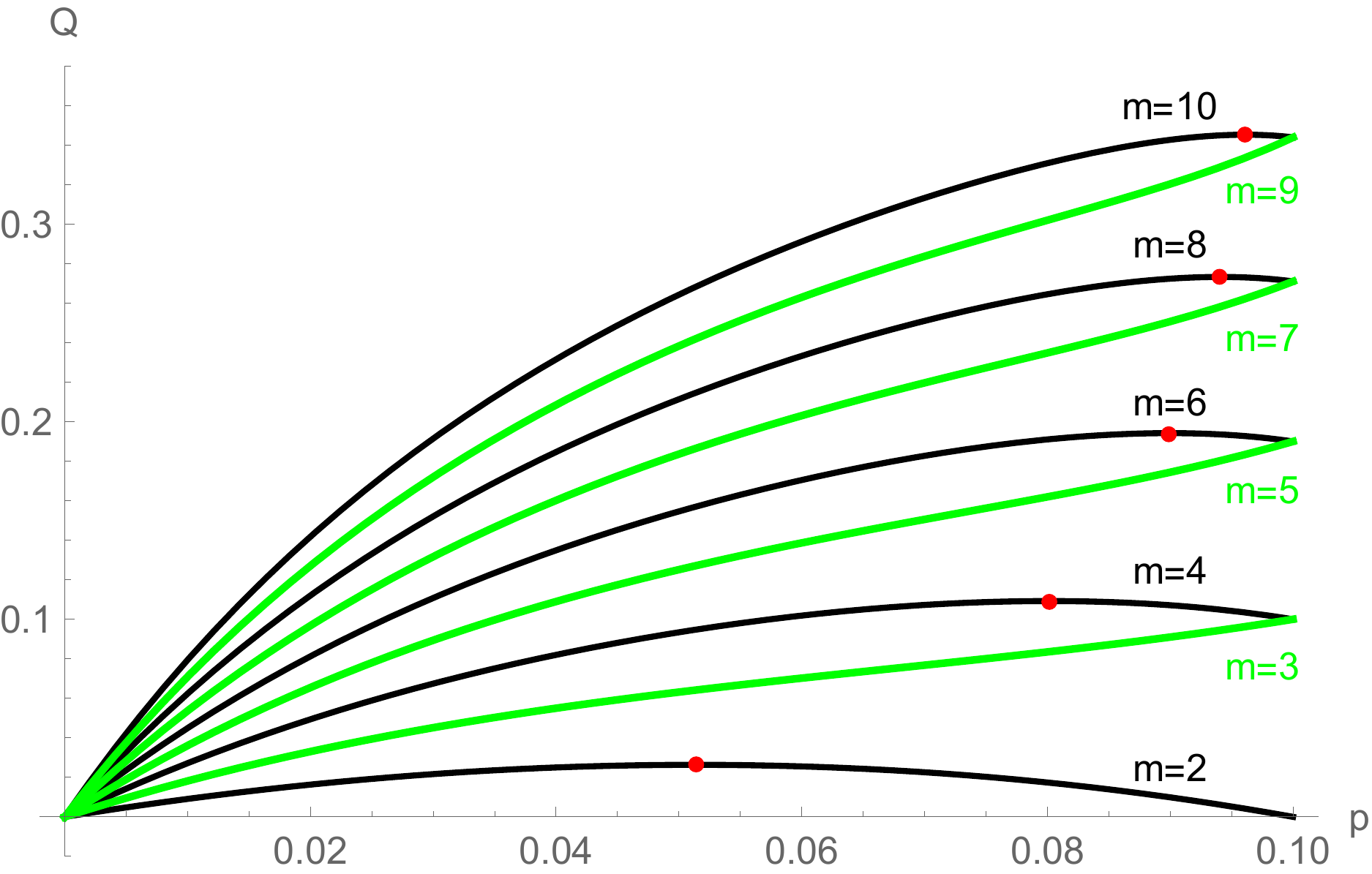}
\end{center}
\caption{Plots of $Q$, for $n=10$, $m=2,\dots,10$, $p\leq \frac{1}{n}$. $\hat p$ are shown by red dots.}
\label{Figure2}
\end{figure}

Note that for even $m$ we get concave plots with interior maxima, while for odd $m$ we get increasing plots maximised at $p=\frac{1}{n}$, that is the maximum value that $p$ can get. For the graphs with even $m$, we also plot the maxima of $p$ in red dots. Note that these maxima move to the right as $m$ increases. For even $m\geq 6$ there is no closed solution for the optimal value of $p$, however (as shown in Figure \ref{Figure2}) we can numerically optimise the expression \eqref{eq15*} in terms of $p$. Note as well that for $p=\frac{1}{n}$ (and $s=1$) the interception probability $Q$ is the same for $m=3$ and $m=4$ (or $m=2k-1$, $m=2k$), which illustrates equality in the right inequality of \eqref{eq11}.

In Figure \ref{Figure2} we numerically calculated the optimal values $\hat{p}$ for a single value of $n=10$. We do the same in Figure \ref{Figure3} (left) but for several values of $n$, namely for $n=5$ (blue circles), $n=10$ (orange squares), $n=15$ (green diamonds), $n=20$ (red triangles). For each $n$ we draw in horizontal dashed lines the asymptotes $1/n$. The decrease in $\hat{p}$ for fixed $m$ as $n$ increases is mainly due to the fact that when leaving the center the Patroller divides her probability of leaving in $n$ possible locations. To reduce this type of dependence on $n$, we plot in Figure \ref{Figure3} (right) the optimal probability $\hat{r}=1-n\hat{p}$ which is seen to depend only slightly on $n$, as the four corresponding symbols are tightly packed for fixed $m$. Note that for $m=4$ the values of $\hat{r}$ are clustered near the value of $\hat{r}\approx 0.202$ given by the asymptotic analysis of Section \ref{Section2.4} (see \eqref{11*}).

\begin{figure}[H]
	\centering
\includegraphics[width=0.475\linewidth]{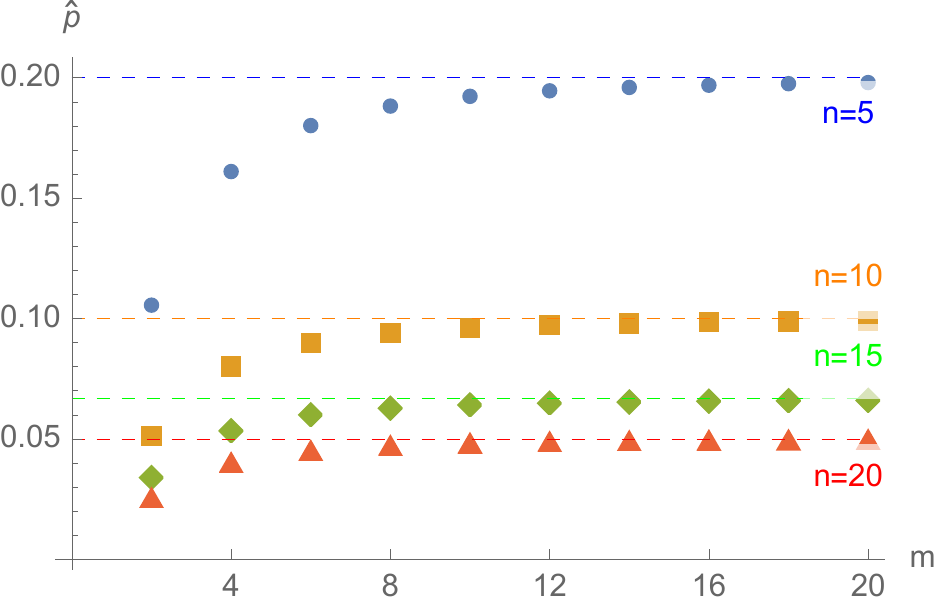}\quad
\includegraphics[width=0.475\linewidth]{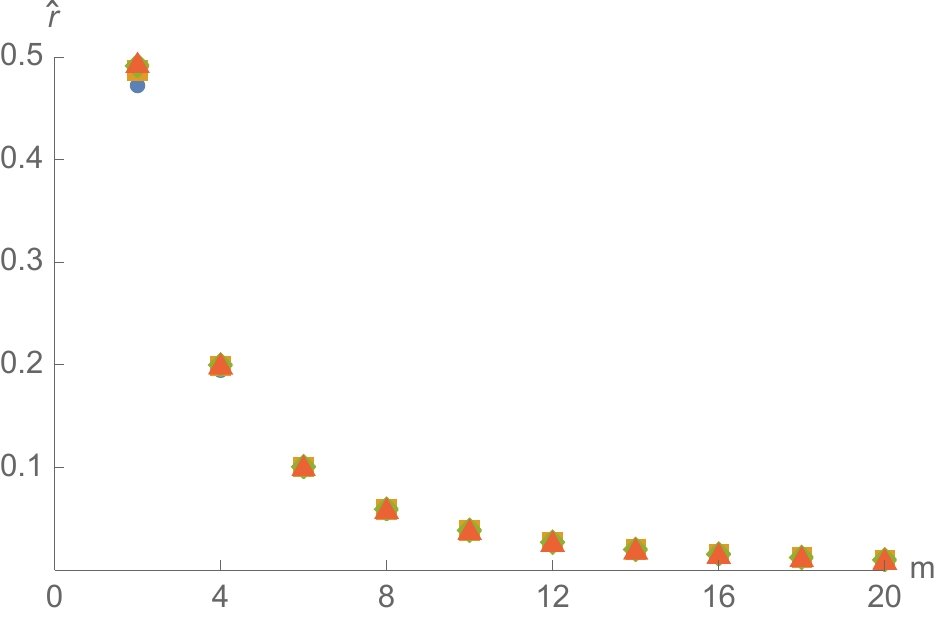}
\caption{$\hat{p}$ (left) and $\hat{r}$ (right) for even $m$, $n=5,\,10,\,15,\,20$\,.}
\label{Figure3}
\end{figure}

\subsection{The Cost of Wearing a Uniform}\label{Section2.7}
\indent 

In the above analysis, the Attacker uses his information about the presence and absence of the Patroller at his location in optimizing the timing of his attack. So it would seem intuitive that forcing the Patroller
to wear a uniform reduces the optimal probability of intercepting the attack. However in a distant but related multiple Patroller game of \cite{Lin-3} it is shown that wearing a uniform does not affect this probability (value of the game). So motivated by that paper we decided to compare the two values (with and without a uniform) in our model. We find in all cases that there is indeed a cost of wearing a uniform, but that this cost varies with the parameters of our game. In Lin's model, a Patroller who leaves the Attacker's position will never return, so the Attacker is only vulnerable to the next Patroller, and this may in part be why there is no cost of wearing a uniform in that model. 

Consider the original patrolling game where the Patroller is non-uniformed and Markovian. Assuming the attack duration is $m=2$, in any interval of length equal to $2$ she visits only one end, so the value of the game (interception probability) is given by $\tilde{V}=\frac{1}{n}$. Recall that the corresponding interception probability $V$ for the Uniformed Patroller game is given by \eqref{eq2}. For large $n$, taking the ratio $\frac{V}{\tilde{V}}$ (uniformed over non-uniformed) and applying twice L'Hopital's rule when indeterminate forms occur, we get
\begin{align*}
\lim_{n\to \infty}\frac{V}{\tilde{V}}&=\lim_{n\to \infty}\frac{(2\cdot n -1)-2\cdot\sqrt{n\cdot (n-1)}}{\frac{1}{n}}=\lim_{n\to \infty}\frac{n\cdot \bigl(2-\frac{1}{n}-2\cdot \sqrt{1-\frac{1}{n}}\bigr)}{n\cdot \frac{1}{n^2}}=\\&=\lim_{n\to \infty}\frac{\frac{1}{n^{2}}-\bigl(1-\frac{1}{n}\bigr)^{-\frac{1}{2}}\cdot \frac{1}{n^{2}}}{-2\cdot \frac{1}{n^{3}}}=\lim_{n\to\infty}\frac{\frac{1}{n^{2}}\cdot \Big(1-\bigl(1-\frac{1}{n}\bigr)^{-\frac{1}{2}}\Bigr)}{\frac{1}{n^{2}}\cdot \bigl(-2\cdot \frac{1}{n}\bigr)}=\\&=\lim_{n\to\infty}\frac{1/2\cdot \bigl(1-\frac{1}{n}\bigr)^{-3/2}\cdot \frac{1}{n^2}}{2\cdot \frac{1}{n^{2}}}=\frac{1}{4}\,.
\end{align*}

\begin{proposition}
For large $n$ and attack duration $m=2$, wearing a uniform reduces the interception probability (value) by a factor of $4$.
\end{proposition}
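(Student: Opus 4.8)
The plan is to combine two facts already in hand: the value $V=(2n-1)-2\sqrt{n(n-1)}$ of the Uniformed Patroller game with $m=2$ from Proposition~\ref{Proposition1} (equation~\eqref{eq2}), and the value $\tilde{V}=1/n$ of the corresponding game with a \emph{non-uniformed} Markovian Patroller and $m=2$. The latter holds because on the star $S_n$ a Markovian Patroller alternates between the centre and the leaves (possibly holding at the centre), so in any window of two consecutive periods she inspects at most one leaf; by the symmetry of $S_n$ her best Markov strategy is the random walk that, from the centre, moves to a uniformly random leaf, which gives every leaf interception probability exactly $1/n$ over a window of length $m=2$, and since the Attacker is never deterred a fixed leaf guarantees him this same $1/n$. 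Hence $\tilde{V}=1/n$ (the $m=2$ instance of the star solution of \cite{Alp-1}).

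It then remains to show $\lim_{n\to\infty} V/\tilde{V}=1/4$. Writing $x=1/n$,
\[
\frac{V}{\tilde{V}} = n\bigl((2n-1)-2\sqrt{n(n-1)}\bigr) = \frac{(2-x)-2\sqrt{1-x}}{x^{2}}.
\]
Both the numerator and its first derivative vanish at $x=0$, so two applications of L'Hopital's rule give the limit $1/4$. Equivalently, and more transparently, the binomial expansion $\sqrt{1-x}=1-\tfrac{x}{2}-\tfrac{x^{2}}{8}+O(x^{3})$ gives $(2-x)-2\sqrt{1-x}=\tfrac{x^{2}}{4}+O(x^{3})$, so $V/\tilde{V}\to 1/4$. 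This is exactly the limit computed in the display preceding the statement.

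The only step requiring genuine care is the identification $\tilde{V}=1/n$: one must verify both that the Markovian restriction does not let the Patroller beat $1/n$ (immediate, since at most one leaf is inspected per two-period window) and that the Attacker can match it (because attacks are never deterred, so any fixed leaf is a guaranteeing strategy). After that, everything reduces to a routine one-line asymptotic expansion, so I do not anticipate a real obstacle here.
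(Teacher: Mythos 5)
Your proposal is correct and follows essentially the same route as the paper: identify $\tilde{V}=1/n$ for the non-uniformed Markovian Patroller with $m=2$ (since at most one end is visited in any two-period window), then compute $\lim_{n\to\infty}V/\tilde{V}=1/4$ from \eqref{eq2}. The paper does the limit by two applications of L'Hopital's rule while you also offer the cleaner binomial expansion $(2-x)-2\sqrt{1-x}=\tfrac{x^2}{4}+O(x^3)$ with $x=1/n$, but this is only a cosmetic difference.
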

 
Similar results can be obtained if we consider alternative values of $m$. For example, recall that when $m$ is odd the value $V$ of the Uniformed Patroller game on the star $S_{n}$ is given by \eqref{eq6}. Without a uniform the Patroller starts equiprobably a random walk at $t=1$ either at an end or at the center, while the Attacker also equiprobably initiates his attack either at an odd or at an even period. That is, the interception probability in the case when $m=2\cdot j+1$ is given by 

\begin{gather*}
\tilde{V}=\frac{1}{2}\cdot\biggl(1-\Bigl(\frac{n-1}{n}\Bigr)^{j}\biggr)+\frac{1}{2}\cdot\biggl(1-\Bigl(\frac{n-1}{n}\Bigr)^{j+1}\biggr).
\end{gather*}

We look again at the ratio of the two interception probabilities. It is easy to show that as $n$ goes to infinity $\frac{V}{\tilde{V}}$ converges to $\frac{2\cdot j}{(2\cdot j+1)}=\frac{(m-1)}{m}$ (the proof is similar to the one for $m=2$). 

\begin{proposition}
When the attack duration m is odd, the relative loss in interception probability of wearing a uniform for large n is the reciprocal of the attack duration. That is,
\[\lim_{n\to \infty} \frac{\tilde{V}-V}{\tilde{V}}= \frac{1}{m}.\]
\end{proposition}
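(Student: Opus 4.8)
The plan is to reduce the claim to a single limit in the quantity $x=(n-1)/n=1-1/n$, which tends to $1$ as $n\to\infty$. Write $m=2j+1$. By Proposition~\ref{Proposition2} we have $V=1-x^{j}$, and the displayed formula for $\tilde V$ preceding the statement reads $\tilde V=\tfrac12(1-x^{j})+\tfrac12(1-x^{j+1})=1-\tfrac12 x^{j}(1+x)$. Subtracting, the awkward terms cancel and one is left with the clean identity
\begin{equation*}
\tilde V-V=\tfrac12 x^{j}-\tfrac12 x^{j+1}=\tfrac12\,x^{j}(1-x)=\frac{x^{j}}{2n}\,,
\end{equation*}
so that
\begin{equation*}
\frac{\tilde V-V}{\tilde V}=\frac{x^{j}/(2n)}{\,1-\tfrac12 x^{j}(1+x)\,}\,.
\end{equation*}
(Equivalently, $(\tilde V-V)/\tilde V=1-V/\tilde V$, and the text already notes $V/\tilde V\to (m-1)/m$; but the direct route above is self-contained.)

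The second step is to resolve the resulting indeterminate form: since $x\to 1$ we have $\tfrac12 x^{j}(1+x)\to 1$, so the denominator also vanishes, and this is the only place a computation is needed. Writing $1+x=2-1/n$ gives $\tilde V=1-x^{j}+x^{j}/(2n)$, and the elementary first-order expansion $1-x^{j}=1-(1-1/n)^{j}=j/n+O(1/n^{2})$ yields $\tilde V=\tfrac{j}{n}+\tfrac{1}{2n}+O(1/n^{2})=\tfrac{2j+1}{2n}+O(1/n^{2})$. The numerator is $x^{j}/(2n)=\tfrac{1}{2n}+O(1/n^{2})$. Dividing, the common factor $1/(2n)$ cancels and the ratio tends to $\tfrac{1}{2j+1}=\tfrac1m$, as claimed. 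Alternatively one may replace $n$ by a continuous variable and apply L'Hopital's rule, exactly as in the $m=2$ calculation displayed just before the proposition.

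The only genuine obstacle is this indeterminate form, and it is dispatched by the first-order expansion of $(1-1/n)^{j}$; everything else is the one-line cancellation $\tilde V-V=\tfrac12 x^{j}(1-x)$. The argument also makes transparent why the limit is exactly $1/m$: the gap $\tilde V-V$ is the ``half-step'' discrepancy between intercepting an attack that occupies $j$ random ends and one that occupies $j+1$, of size $\sim\tfrac{1}{2n}$, while $\tilde V$ itself is of order $\tfrac{j}{n}+\tfrac{1}{2n}$, and $(\tfrac12)/(j+\tfrac12)=1/(2j+1)$.
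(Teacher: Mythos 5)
Your proof is correct and is in substance the argument the paper intends, which it only sketches by asserting that the computation is ``similar to the one for $m=2$'' (where L'Hopital's rule is applied to the ratio $V/\tilde V$). Your route --- the exact cancellation $\tilde V - V = \tfrac{1}{2}x^{j}(1-x) = x^{j}/(2n)$ with $x=(n-1)/n$, followed by the first-order expansion $1-(1-1/n)^{j} = j/n + O(1/n^{2})$ so that $\tilde V = (2j+1)/(2n)+O(1/n^{2})$ --- supplies the details the paper omits and resolves the indeterminate form more cleanly than repeated differentiation would, arriving at the same limit $1/(2j+1)=1/m$.
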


\section{Conclusion}\label{Section3}
\indent 

This paper examined the aspects of patrolling games that change when the Attacker can detect the presence of the Patroller at his location. We call such problems Uniformed Patroller Games, as the wearing of a uniform is a metaphor for this type of detection. The Patroller’s problem does not change much from the earlier (non-uniformed) version, but the Attacker must now decide how long to wait after the Patroller has left the location where he waits to carry out his attack. We determine solutions to this game when the Patroller must defend an arbitrary number of locations and the attack lasts an arbitrary number of periods. We proved that for any values of the parameters $m$ and $n$, an  optimal delay is $d=2$ and it is optimal to never patrol a location in consecutive periods ($s=1$). We note that it has been shown for some graphs that other values of the delay $d$ are optimal, see \cite{Alp-3}.

In this model the game without a uniform is easy to solve so that we can determine the reduction in the probability of intercepting an attack when wearing a uniform. This reduction turns out to be sensitive to the parameters involved. These results contrast with the lack of any reduction in interception probability in the continuous time model of \cite{Lin-3}, with multiple agents (of course the models are very different). Since wearing a uniform makes intercepting attacks more difficult, we plan to investigate the role the uniform plays in deterring attacks. This should add to the current lively debate on the importance of uniformed police ‘on the beat’ as opposed to undercover agents.

%
%
%


\section*{Acknowledgements}

\end{document}